\definecolor{blank}{rgb}{0.7,0.7,0.7}
\long\def\comment#1{}
\renewcommand{\phi}{\varphi}
\def\defemb#1#2{\expandafter\def\csname #1\endcsname
                              {\relax\ifmmode #2\else\hbox{$#2$}\fi}}
\newenvironment{prog}{\vspace{1.0ex}\par
\obeylines\@vobeyspaces\tt}{\vspace{1.0ex}\noindent
}
\newcommand{\startprog}{\begin{prog}}
\newcommand{\stopprog}{\end{prog}\noindent}
\begin{document}
\frontmatter          
\pagestyle{headings}  
\addtocmark{Hamiltonian Mechanics} 


\title{Optimal Divide and Query\\ (extended version)\thanks{This work has been partially supported by the Spanish \emph{Ministerio de
Ciencia e Innovaci\'on} under grant TIN2008-06622-C03-02 and by the
\emph{Generalitat Valenciana} under grant PROMETEO/2011/052.}}
\titlerunning{Optimal Divide and Query} 
\author{David Insa \and Josep Silva}

\institute{
Universidad Polit\'ecnica de Valencia\\Camino de Vera s/n,
E-46022 Valencia, Spain. \\
\email{\{dinsa,jsilva\}@dsic.upv.es}
}

\maketitle              

\begin{abstract}
Algorithmic debugging is a semi-automatic debugging technique that allows the programmer to precisely identify the location of bugs without the need to inspect the source code. The technique has been successfully adapted to all paradigms and mature implementations have been released for languages such as Haskell, Prolog or Java. During three decades, the algorithm introduced by Shapiro and later improved by Hirunkitti 
has been thought optimal. In this paper we first show that this algorithm is not optimal, and moreover, in some situations it is unable to find all possible solutions, thus it is incomplete. Then, we present a new version of the algorithm that is proven optimal, and we introduce some equations that allow the algorithm to identify all optimal solutions.
\keywords{Algorithmic Debugging, Strategy, Divide \& Query}
\end{abstract}

\section{Introduction}

Debugging is one of the most important but less automated (and, thus, time-consuming) tasks in the software development process. 
The programmer is often forced to manually explore the code or iterate over it using, e.g., breakpoints, and this process usually requires a deep understanding of the source code to find the bug. \emph{Algorithmic debugging} \cite{Sha82} is a semi-automatic debugging technique that has been extended to practically all paradigms \cite{Sil07c}. Recent research has produced new advances to increase the scalability of the technique producing new scalable and mature debuggers. The technique is based on the answers of the programmer to a series of questions generated automatically by the algorithmic debugger.
The questions are always whether a given result of an activation of a subcomputation with given input values is actually correct.
The answers provide the debugger with information about the correctness of some (sub)computations of a given program; and the debugger uses them to guide the search for the bug until a buggy portion of code is isolated.

\begin{example}
\label{ex_motivating_example}
Consider this simple Haskell program inspired in a similar example by \cite{Fri92}.
 It wrongly (it has a bug) implements the sorting algorithm \emph{Insertion Sort}:

{\footnotesize
\begin{verbatim}
main = insort [2,1,3]

insort [] = []
insort (x:xs) = insert x (insort xs)

insert x [] = [x]
insert x (y:ys) = if x>=y then (x:y:ys)
                          else (y:(insert x ys))
\end{verbatim}
}

\noindent An algorithmic debugging session for this program is the following
({\tt YES} and {\tt NO} answers are provided by the programmer):

{\footnotesize
\begin{verbatim}
Starting Debugging Session...
(1)  insort [1,3] = [3,1]? NO
(2)  insort [3] = [3]? YES
(3)  insert 1 [3] = [3,1]? NO
(4)  insert 1 [] = [1]? YES

Bug found in rule:
insert x (y:ys) = if x>=y then _ else (y:(insert x ys))
\end{verbatim}
}
\noindent The debugger points out the part of the code that contains the bug.
In this case {\tt x>=y} should be {\tt x<=y}. Note that, to debug the program, the programmer only has to answer questions. It is not even necessary to see the code.
\end{example}

Typically, algorithmic debuggers have a front-end that produces a data structure representing a program execution---the so-called \textit{execution tree} (ET) \cite{nilphd98}---; and a back-end that uses the ET to ask questions and process the answers of the programmer to locate the bug. For instance, the ET of the program in Example~\ref{ex_motivating_example} is depicted in Figure~\ref{fig_ET_Insort}.

\begin{figure}[h]
\label{fig_ET_Insort}
  \centering
  \includegraphics[width=8cm]{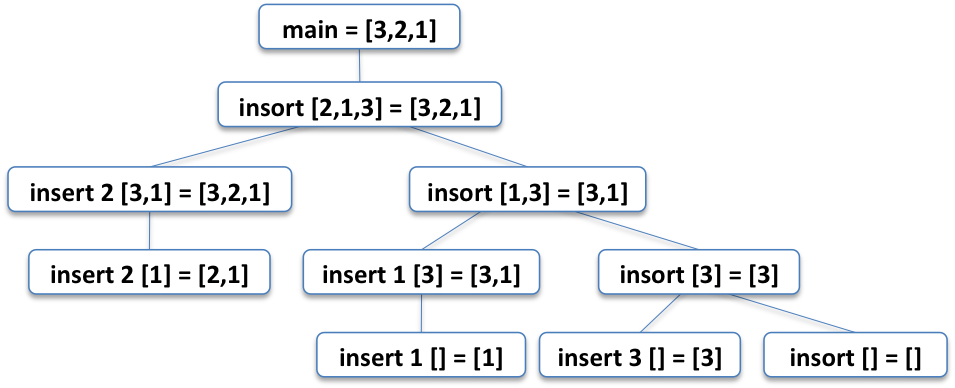}\\

\caption{ET of the program in Example~\ref{ex_motivating_example}}
\end{figure}

The strategy used to decide what nodes of the ET should be asked is crucial for the performance of the technique. Since the definition of algorithmic debugging, there has been a lot of research concerning the definition of new strategies trying to minimize the number of questions \cite{Sil07c}. We conducted several experiments to measure the performance of all current algorithmic debugging strategies. The results of the experiments are shown in Figure~\ref{fig_comparison}, where the first column contains the names of the benchmarks; column {\tt nodes} shows the number of nodes in the ET associated with each benchmark; and the other columns represent algorithmic debugging strategies \cite{Sil07c} that are ordered according to their performance: Optimal Divide \& Query ({\tt D\&QO}), Divide \& Query by Hirunkitti ({\tt D\&QH}), Divide \& Query by Shapiro ({\tt D\&QS}), Divide by Rules \& Query ({\tt DR\&Q}), Heaviest First ({\tt HF}), More Rules First ({\tt MRF}), Hat Delta Proportion ({\tt HD-P}), Top-Down ({\tt TD}), Hat Delta YES ({\tt HD-Y}), Hat Delta NO ({\tt HD-N}), Single Stepping ({\tt SS}).

\begin{figure*}
\centering
\includegraphics[width=12.3cm]{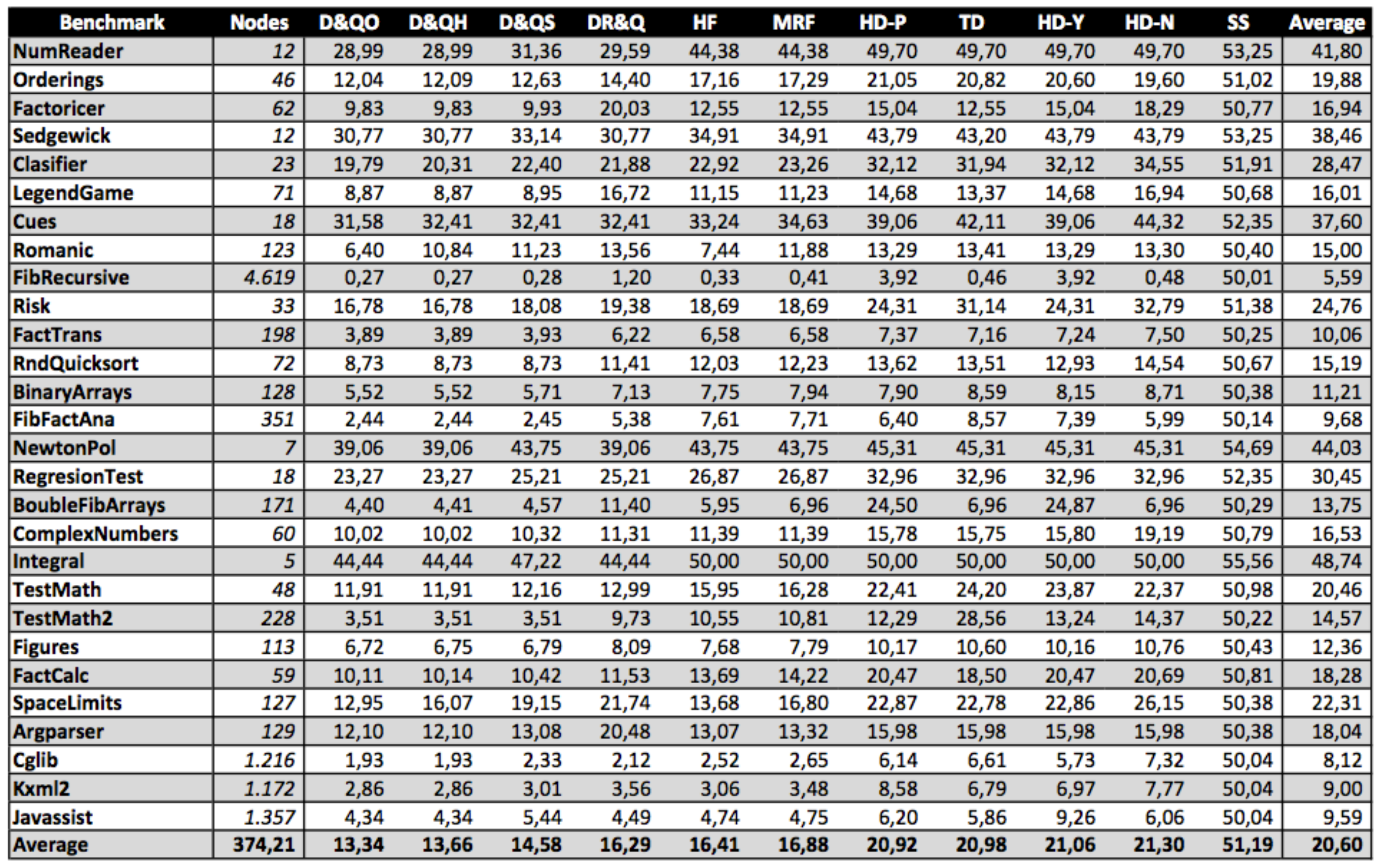}\\
\caption{Performance of algorithmic debugging strategies}\label{fig_comparison}
\end{figure*}

For each benchmark, we produced its associated ET and assumed that the buggy node could be any node of the ET (i.e., any subcomputation in the execution of the program could be buggy). Therefore, we performed a different experiment for each possible case and, hence, each cell of the table summarizes a number of experiments that were automatized. In particular, benchmark \emph{Factoricer} has been debugged 62 times with each strategy; each time we selected a different node and simulated that it was buggy, thus the results shown are the average number of questions performed by each strategy with respect to the number of nodes (i.e., the mean percentage of nodes asked). Similarly, benchmark \emph{Cglib} has been debugged 1216 times with each strategy, and so on.

Observe that the best algorithmic debugging strategies in practice are the two variants of Divide and Query (ignoring our new technique D\&QO). Moreover, from a theoretical point of view, this strategy has been thought optimal in the worst case for almost 30 years, and it has been implemented in almost all current algorithmic debuggers (see, e.g., \cite{Cab09b,Tho06,IS10,BerniePopePhd}). In this paper we show that current algorithms for D\&Q are suboptimal. We show the problems of D\&Q and solve them in a new improved algorithm that is proven optimal. Moreover, the original strategy was only defined for ETs where all the nodes have an individual weight of 1. 
In contrast, we allow our algorithms to work with different individual weights that can be integer, but also decimal. An individual weight of zero means that this node cannot contain the bug. A positive individual weight approximates the probability of being buggy. 
The higher the individual weight, the higher the probability. This generalization strongly influences the technique and allows us to assign different probabilities of being buggy to different parts of the program. For instance, a recursive function with higher-order calls should be assigned a higher individual weight than a function implementing a simple base case \cite{Sil07c}.
The weight of the nodes can also be reassigned dynamically during the debugging session in order to take into account the oracle's answers \cite{Tho06}. 

We show that the original algorithms are inefficient with ETs where nodes can have different individual weights in the domain of the positive real numbers (including zero) and we redefine the technique for these generalized ETs.

The rest of the paper has been organized as follows. In Section~\ref{sec_DQ} we recall and formalize the strategy D\&Q and we show with counterexamples that it is suboptimal and incomplete. Then, in Section~\ref{sec_optimal} we introduce two new algorithms for D\&Q that are optimal and complete. Each algorithm is useful for a different type of ET. 
Finally, Section~\ref{sec_conclusions} concludes. Proofs of technical results can be found in the appendix.

\section{D\&Q by Shapiro vs. D\&Q by Hirunkitti}
\label{sec_DQ}

In this section we formalize the strategy D\&Q to show the differences between the original version by Shapiro \cite{Sha82} and the improved version by Hirunkitti and Hogger \cite{Hir93}. We start with the definition of \emph{marked execution tree}, that is an ET where some nodes could have been removed because they were marked as correct (i.e., answered YES), some nodes could have been marked as wrong (i.e., answered NO) and the correctness of the other nodes is undefined.  

\begin{definition}[Marked Execution Tree]
A \emph{mar\-ked execution tree} (MET) is a tree $T = (N, E, M)$ where $N$ are the nodes, $E \subseteq N \times N$ are the edges, and $M:N \rightarrow V$ is a marking total function that assigns to all the nodes in $N$ a value in the domain $V = \{\mathit{Wrong},\mathit{Undefined}\}$.
\end{definition}

Initially, all nodes in the MET are marked as $\mathit{Undefined}$. But with every answer of the user, a new MET is produced. Concretely, given a MET $T = (N, E, M)$ and a node $n \in N$, the answer of the user to the question in $n$ produces a new MET such that: (i) if the answer is YES, then this node and its subtree is removed from the MET. (ii) If the answer is NO, then, all the nodes in the MET are removed except this node and its descendants.\footnote{It is also possible to accept \emph{I don't know} as an answer of the user. In this case, the debugger simply selects another node \cite{IS10}. For simplicity, we assume here that the user only answers $\mathit{Correct}$ or $\mathit{Wrong}$.} Therefore, note that the only node that can be marked as $\mathit{Wrong}$ is the root. Moreover, the rest of nodes can only be marked as $\mathit{Undefined}$ because when the answer is YES, the associated subtree is deleted from the MET.

Therefore, the size of the MET is gradually reduced with the answers. If we delete all nodes in the MET then the debugger concludes that no bug has been found. If, contrarily, we finish with a MET composed of a single node marked as wrong, this node is called the \emph{buggy node} and it is pointed to as being responsible for the bug of the program.

All this process is defined in Algorithm~\ref{algoritmoGeneral} where function \emph{selectNode} selects a node in the MET to be asked to the user with function \emph{askNode}. Therefore, \emph{selectNode} is the central point of this paper. In the rest of this section, we assume that \emph{selectNode} implements D\&Q. In the following we use $E^*$ to refer to the reflexive and transitive closure of $E$ and $E^+$ for the transitive closure.

\begin{algorithm}
\caption{General algorithm for algorithmic debugging}
\label{algoritmoGeneral}
\begin{algorithmic}
\STATE \textbf{Input:} A MET $T = (N, E, M)$\\
\STATE \textbf{Output:} A buggy node or $\bot$ if no buggy node is detected\\
\STATE \textbf{Preconditions:} $\forall n \in N$, $M(n) = \mathit{Undefined}$
\STATE \textbf{Initialization:} buggyNode $=\bot$

\medskip
\textbf{begin}

\medskip
\STATE (1) ~\textbf{do}
\STATE (2) ~~~~~node = selectNode($T$)\\
\STATE (3) ~~~~~answer = askNode(node)\\
\STATE (4) ~~~~~\textbf{if} (answer = $\mathit{Wrong}$) 
\STATE (5) ~~~~~\textbf{then} $M$(node) = $\mathit{Wrong}$
\STATE (6) ~~~~~~~~~~~~~buggyNode = node
\STATE (7) ~~~~~~~~~~~~~$N = \{n \in N \mid ($node $ \rightarrow n) \in E^*$\}
\STATE (8) ~~~~~\textbf{else} $N = N \backslash \{n \in N \mid ($node $ \rightarrow n) \in E^*$\}\\
\STATE (9) ~\textbf{while} $(\exists n \in N, M(n)=\mathit{Undefined})$
\STATE (10) \textbf{return} buggyNode\\
\medskip
\textbf{end}

\end{algorithmic}
\end{algorithm}

Both D\&Q by Shapiro and D\&Q by Hirunkitti assume that the individual weight of a node is always 1. Therefore, given a MET $T = (N, E, M)$, the weight of the subtree rooted at node $n \in N$, $w_n$, is defined recursively as its number of descendants including itself (i.e., $1 + \sum{\{w_{n'} \mid (n \rightarrow n') \in E\}})$.

D\&Q tries to simulate a dichotomic search by selecting the node that better divides the MET into two subMETs with a weight as similar as possible. Therefore, given a MET with $n$ nodes, D\&Q searches for the node whose weight is closer to $\frac{n}{2}$. 
The original algorithm by Shapiro always selects:
\begin{itemize}
\item the heaviest node $n'$ whose weight is as close as possible to $\frac{n}{2}$ with $w_{n'} \leq \frac{n}{2}$ 
\end{itemize}
Hirunkitti and Hogger noted that this is not enough to divide the MET by half and their improved version always selects the node whose weight is closer to $\frac{n}{2}$ between:
\begin{itemize}
\item the heaviest node $n'$ whose weight is as close as possible to $\frac{n}{2}$ with $w_{n'} \leq \frac{n}{2}$, or
\item the lightest node $n'$ whose weight is as close as possible to $\frac{n}{2}$ with $w_{n'} \geq \frac{n}{2}$
\end{itemize}


Because it is better, in the rest of the article we only consider Hirunkitti's D\&Q and refer to it as D\&Q.

\subsection{Limitations of D\&Q}

In this section we show that D\&Q is suboptimal when the MET does not contain a wrong node (i.e., all nodes are marked as undefined).\footnote{Modern debuggers \cite{IS10} allow the programmer to debug the MET while it is being generated. Thus the root node of the subtree being debugged is not necessarily marked as \emph{Wrong}.}
The intuition beyond this limitation is that the objective of D\&Q is to divide the tree by two, but the real objective should be to reduce the number of questions to be asked to the programmer.
For instance, consider the MET in Figure~\ref{samples} (left) where the black node is marked as wrong and D\&Q would select the gray node. The objective of D\&Q is to divide the 8 nodes into two groups of 4. Nevertheless, the real motivation of dividing the tree should be to divide the tree into two parts that would produce the same number of remaining questions (in this case 3). 

The problem comes from the fact that D\&Q does not take into account the marking of wrong nodes. For instance, observe the two METs in Figure~\ref{samples} (center) where each node is labeled with its weight and the black node is marked as wrong. In both cases D\&Q would behave exactly in the same way, because
it completely ignores the marking of the root. Nevertheless, it is evident that we do not need to ask again for a node that is already marked as wrong to determine whether it is buggy. 
However, D\&Q counts the nodes marked as wrong as part of their own weight, and this is a source of inefficiency.

\begin{figure}[h!]
\centering
\includegraphics[height=2.75cm]{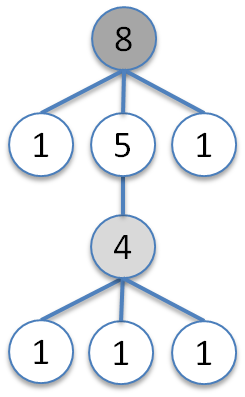}
\hspace{1cm}
\includegraphics[height=2cm]{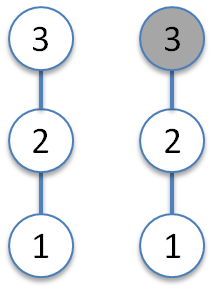}
\hspace{1cm}
\includegraphics[height=2.6cm]{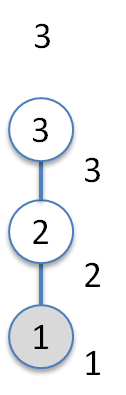}
\includegraphics[height=2.6cm]{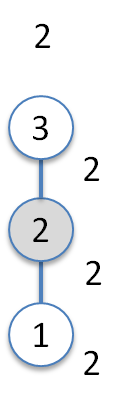}

\caption{Behavior of Divide and Query}
\label{samples}
\end{figure}

In the METs of Figure~\ref{samples} (center) we have two METs. In the one at the right nodes with weight 1 and 2 are optimal, but in the one at the left, only the node with weight 2 is optimal.
In both METs D\&Q would select either the node with weight 1 or the node with weight 2 (both are equally close to $\frac{3}{2}$). However, we show in Figure~\ref{samples} (right) that selecting node 1 is suboptimal, and the strategy should always select node 2. Considering that the gray node is the first node selected by the strategy, then the number at the side of a node represents the number of questions needed to find the bug if the buggy node is this node.  The number at the top of the figure represents the number of questions needed to determine that there is not a bug. Clearly, as an average, it is better to select first the node with weight 2 because we would perform less questions ($\frac{8}{4}$ vs. $\frac{9}{4}$ considering all four possible cases).

Therefore, D\&Q returns a set of nodes that contains the best node, but it is not able to determine which of them is the best node, thus being suboptimal when it is not selected. In addition, the METs in Figure \ref{5nodos} show that D\&Q is incomplete. Observe that the METs have 5 nodes, thus D\&Q would always select the node with weight 2. However, the node with weight 4 is equally optimal (both need $\frac{16}{6}$ questions as an average to find the bug) but it will be never selected by D\&Q because its weight is far from the half of the tree $\frac{5}{2}$.

\begin{figure}
\centering
\includegraphics[height=3.3cm]{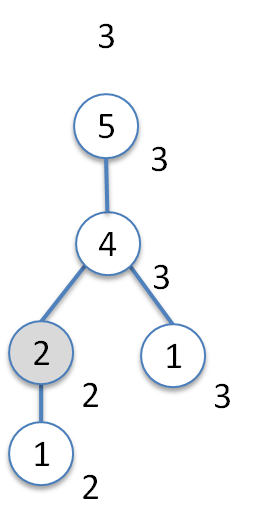}
\hspace{1cm}
\includegraphics[height=3.3cm]{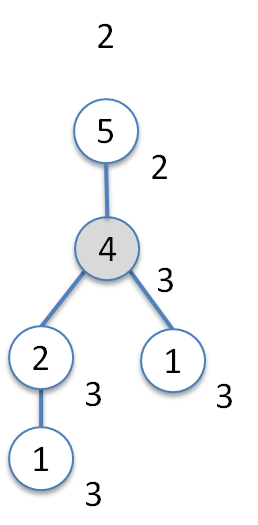}
\caption{Incompleteness of Divide and Query}
\label{5nodos}
\end{figure}

Another limitation of D\&Q is that it was designed to work with METs where all the nodes have the same individual weight, and moreover, this weight is assumed to be one. If we work with METs where nodes can have different individual weights and these weights can be any value greater or equal to zero, then D\&Q is suboptimal 
as it is demonstrated by the MET in Figure~\ref{fig_decimal}. In this MET, D\&Q would select node $n_1$ because its weight is closer to $\frac{21}{2}$ than any other node. However, node $n_2$ is the node that better divides the tree in two parts with the same probability of containing the bug.

\begin{figure}
\centering
\includegraphics[width=5.5cm]{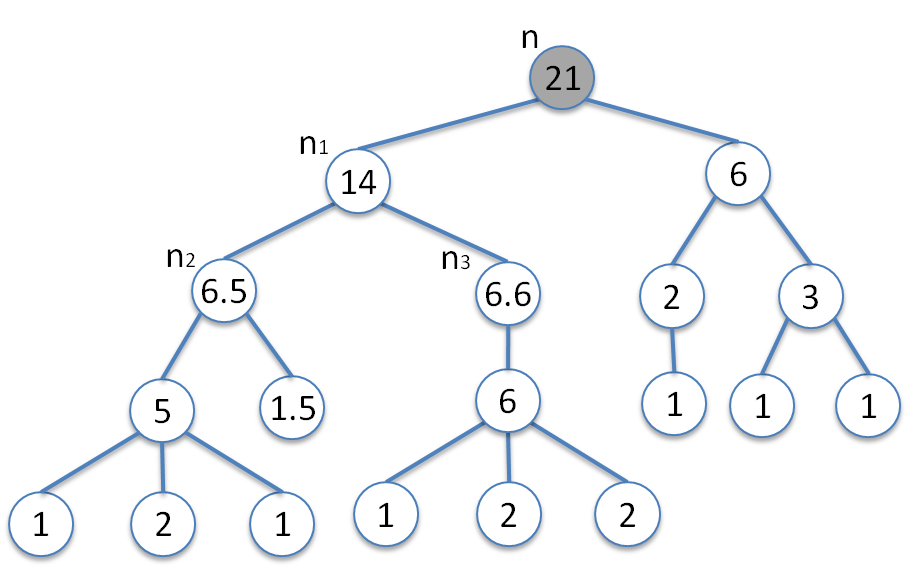}
\caption{MET with decimal individual weights}
\label{fig_decimal}
\end{figure}

In summary, 
(1) D\&Q is suboptimal when the MET is free of wrong nodes, 
(2) D\&Q is correct when the MET contains wrong nodes and all the nodes of the MET have the same weight, but
(3) D\&Q is suboptimal when the MET contains wrong nodes and the nodes of the MET have different individual weights.

\section{Optimal D\&Q}
\label{sec_optimal}

In this section we introduce a new version of D\&Q that tries to divide the MET into two parts with the same probability of containing the bug (instead of two parts with the same weight). We introduce new algorithms that are correct and complete even if the MET contains nodes with different individual weights. 
For this, we define the \emph{search area} of a MET as the set of undefined nodes.

\begin{definition} [Search area]
Let $T = (N, E, M)$ be a MET.  The \emph{search area} of $T$, $Sea(T)$, is defined as \{$n \in N \mid M(n) = \mathit{Undefined}$\}.
\end{definition}

While D\&Q uses the whole $T$, we only use $Sea(T)$, because answering all nodes in $Sea(T)$ guarantees that we can discover all buggy nodes \cite{Llo97}. 
Moreover, in the following we refer to the individual weight of a node $n$ with $wi_n$; and we refer to the weight of a (sub)tree rooted at $n$ with $w_n$ that is recursively defined as:
\[w_n=\left\{
\begin{array}{lll}
   \sum{\{w_{n'} \mid (n \rightarrow n') \in E\}} & ~~~~~~ & {\rm if }~ M(n) \neq \mathit{Undefined} \\      
   wi_n + \sum{\{w_{n'} \mid (n \rightarrow n') \in E\}} & ~~~~ & {\rm otherwise}
\end{array}\right.\]

Note that, contrarily to standard D\&Q, the definition of $w_n$ excludes those nodes that are not in the search area (i.e., the root node when it is wrong). Note also that $wi_n$ allows us to assign any individual weight to the nodes. This is an important generalization of D\&Q where it is assumed that all nodes have the same individual weight and it is always 1. 

\subsection{Debugging ETs where all nodes have the same individual weight \bf$wi \in \cR^+$}
\label{sec_sameWeight}

For the sake of clarity, given a node $n \in \mathit{Sea}(T)$, we distinguish between three subareas of $\mathit{Sea}(T)$ induced by $n$: (1) $n$ itself, whose individual weight is $wi_n$; (2) descendants of $n$, whose weight is

~~~~~~~~~$\mathit{Down}(n) = \sum{\{wi_{n'} \mid n' \in \mathit{Sea}(T) \land (n \rightarrow n') \in E^+\}}$

\noindent and (3) the rest of nodes, whose weight is

~~~~~~~~~~~~$\mathit{Up}(n) = \sum{\{wi_{n'} \mid n' \in \mathit{Sea}(T) \land (n \rightarrow n') \not\in E^*\}}$

\begin{example}
Consider the MET in Figure~\ref{gpf}.
\begin{figure}
\centering
\includegraphics[width=3cm]{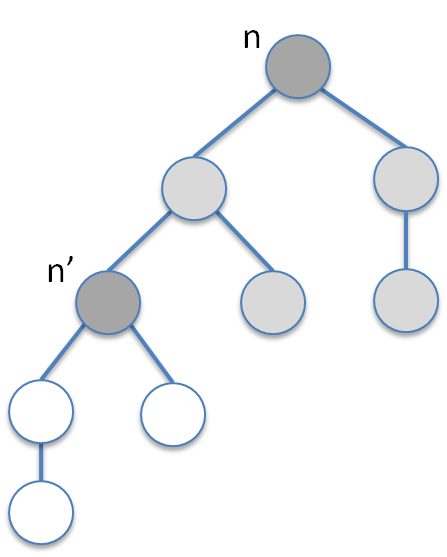}
\caption{Functions Up and Down}
\label{gpf}
\end{figure}
Assuming that the root $n$ is marked as wrong and all nodes have an individual weight of 1, then $\mathit{Sea}(T)$ contains all nodes except $n$, $\mathit{Up}(n')=4$ (total weight of the gray nodes), and $\mathit{Down}(n')=3$ (total weight of the white nodes).
\end{example}

Clearly, for any MET whose root is $n$ and a node $n'$, $M(n') = \mathit{Undefined}$, we have that:

\smallskip
$
\begin{array}{l@{~~~~~~~~~~~~~~~~~~~~~~~~~~~}r}
w_n = \mathit{Up}(n') + \mathit{Down}(n') + wi_{n'} & $(Equation 1)$\\
w_{n'} = \mathit{Down}(n') + wi_{n'} & $(Equation 2)$\\

\end{array}
$
\smallskip

Intuitively, given a node $n$, what we want to divide by half is the area formed by $\mathit{Up}(n)+\mathit{Down}(n)$. That is, $n$ will not be part of $\mathit{Sea}(T)$ after it has been answered, thus the objective is to make $\mathit{Up}(n)$ equal to $\mathit{Down}(n)$. This is another important difference with traditional D\&Q: $wi_n$ should not be considered when dividing the MET. We use the notation $n_1 \gg n_2$ to express that $n_1$ divides $\mathit{Sea}(T)$ better than $n_2$ (i.e., $|\mathit{Down}(n_1) - \mathit{Up}(n_1)| < |\mathit{Down}(n_2) - \mathit{Up}(n_2)|$). And we use $n_1 \equiv n_2$ to express that $n_1$ and $n_2$ equally divide $\mathit{Sea}(T)$. If we find a node $n$ such that $\mathit{Up}(n)=\mathit{Down}(n)$ then $n$ produces an optimal division, and should be selected by the strategy. If an optimal solution cannot be found, the following theorem states how to compare the nodes in order to decide which of them should be selected.



\begin{theorem}
\label{equation}
Given a MET $T = (N, E, M)$ whose root is $n \in N$, where $\forall n',n'' \in N, wi_{n'}=wi_{n''}$ and $\forall n' \in N, wi_{n'} > 0$, and given two nodes $n_1,n_2 \in Sea(T)$, with $w_{n_1} > w_{n_2}$, $n_1 \gg n_2$ if and only if $w_n > w_{n_1} + w_{n_2} - wi_n$.
\end{theorem}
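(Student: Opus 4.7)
The plan is to reduce the comparison $n_1 \gg n_2$ to a purely algebraic inequality in $w_n$, $w_{n_1}$, $w_{n_2}$ and the common individual weight $wi$, and then manipulate it into the stated form. Since the hypothesis fixes $wi_{n'} = wi$ for every $n' \in N$ (including the root), we will write $wi_n = wi$ throughout.

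First, using Equations 1 and 2, I would express $\mathit{Up}(n_i)$ and $\mathit{Down}(n_i)$ in closed form: from Equation 2, $\mathit{Down}(n_i) = w_{n_i} - wi$, and combining Equations 1 and 2 gives $\mathit{Up}(n_i) = w_n - w_{n_i}$. Therefore the signed imbalance at $n_i$ is
\[
\delta_i \;=\; \mathit{Down}(n_i) - \mathit{Up}(n_i) \;=\; 2w_{n_i} - wi - w_n,
\]
and $n_1 \gg n_2$ is by definition equivalent to $|\delta_1| < |\delta_2|$.

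The next step is to square both sides (this is safe because both are nonnegative). After expansion, the common constant term $(wi+w_n)^2$ cancels, leaving
\[
4w_{n_1}^2 - 4w_{n_1}(wi + w_n) \;<\; 4w_{n_2}^2 - 4w_{n_2}(wi + w_n),
\]
which rearranges to
\[
(w_{n_1} - w_{n_2})\,(w_{n_1} + w_{n_2}) \;<\; (w_{n_1} - w_{n_2})\,(wi + w_n).
\]
By the hypothesis $w_{n_1} > w_{n_2}$ the factor $w_{n_1} - w_{n_2}$ is strictly positive, so dividing through yields $w_{n_1} + w_{n_2} < wi + w_n$, i.e.\ $w_n > w_{n_1} + w_{n_2} - wi = w_{n_1} + w_{n_2} - wi_n$, which is the claimed inequality. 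Each of these steps is an equivalence, giving both directions of the ``if and only if''.

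The only delicate point I foresee is the squaring step: one must check that the reduction $|\delta_1| < |\delta_2| \Leftrightarrow \delta_1^2 < \delta_2^2$ is legitimate and that no spurious sign issue is introduced when dividing by $w_{n_1}-w_{n_2}$. Both concerns are dispatched by the strict hypothesis $w_{n_1} > w_{n_2}$ and the fact that $|\cdot|$ is nonnegative, so this is the main thing to state carefully rather than a genuine obstacle. A secondary minor remark worth recording is that the argument does not rely on whether the root $n$ is marked Wrong or Undefined, since the quantities $\mathit{Up}$ and $\mathit{Down}$ are defined via $Sea(T)$ and the hypothesis fixes $wi_n$ uniformly.
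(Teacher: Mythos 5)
Your proof is correct, and every step checks out: $\mathit{Down}(n_i)=w_{n_i}-wi$ and $\mathit{Up}(n_i)=w_n-w_{n_i}$ follow from Equations 1 and 2, the equivalence $|\delta_1|<|\delta_2|\Leftrightarrow\delta_1^2<\delta_2^2$ is legitimate for nonnegative quantities, and the hypothesis $w_{n_1}>w_{n_2}$ is used exactly where it must be, to divide by the positive factor $w_{n_1}-w_{n_2}$. The underlying algebra is the same as the paper's, but your organization is genuinely more direct. The paper first proves an auxiliary lemma characterizing $n_1\gg n_2$ by the product $\mathit{Up}(n_1)\cdot\mathit{Down}(n_1)>\mathit{Up}(n_2)\cdot\mathit{Down}(n_2)$ (writing $u=c+h$, $d=c-h$ so that $ud=c^2-h^2$), and then converts the weight inequality into that product inequality through a chain that introduces the auxiliary ratio $d=\frac{d_{n_1}}{d_{n_1}-d_{n_2}}$ and divides by $d_{n_1}-d_{n_2}$. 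Your version works with the signed imbalance $\delta_i=2w_{n_i}-wi-w_n$ directly and squares it; since $u+d$ is the same constant for every node in $Sea(T)$, comparing products $ud$ and comparing $\delta^2=(d-u)^2$ are two parametrizations of the same parabola, so nothing is lost. What your route buys is the elimination of the intermediate lemma and of the somewhat opaque substitution in the paper's chain, with the two delicate points (squaring, and the sign of $w_{n_1}-w_{n_2}$) isolated and justified explicitly; what the paper's route buys is that the product lemma is reused elsewhere (e.g., in the proof of Corollary~\ref{Cor_prod}), so it is not wasted effort in the larger development.
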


\begin{theorem}
\label{prop}
Given a MET $T = (N, E, M)$ whose root is $n \in N$, where $\forall n',n'' \in N, wi_{n'}=wi_{n''}$ and $\forall n' \in N, wi_{n'} > 0$, and given two nodes $n_1,n_2 \in Sea(T)$, with $w_{n_1} > w_{n_2}$, $n_1 \equiv n_2$ if and only if $w_n = w_{n_1} + w_{n_2} - wi_n$.
\end{theorem}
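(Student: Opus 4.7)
The plan is to mirror the proof of Theorem~\ref{equation}, replacing strict inequalities by equalities at each step. The starting point is again Equations~1 and~2, from which I would first derive, for any node $n' \in Sea(T)$, the identity
$$\mathit{Down}(n') - \mathit{Up}(n') = 2w_{n'} - wi_n - w_n,$$
using that all individual weights are equal to the common positive constant $wi_n$. This converts the statement about how well a node balances the search area into a linear arithmetic condition on $w_{n'}$, $wi_n$ and $w_n$.

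Next I would unfold the definition of $n_1 \equiv n_2$, namely
$$|\mathit{Down}(n_1) - \mathit{Up}(n_1)| = |\mathit{Down}(n_2) - \mathit{Up}(n_2)|,$$
and substitute the identity above to obtain $|2w_{n_1} - wi_n - w_n| = |2w_{n_2} - wi_n - w_n|$. The hypothesis $w_{n_1} > w_{n_2}$ forces the two quantities inside the absolute values to be strictly different, so the only way for their absolute values to coincide is for them to be opposites. Setting $2w_{n_1} - wi_n - w_n = -(2w_{n_2} - wi_n - w_n)$ and rearranging yields exactly $w_n = w_{n_1} + w_{n_2} - wi_n$, the claimed equation.

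The converse direction follows because every algebraic step above is reversible: starting from $w_n = w_{n_1} + w_{n_2} - wi_n$ one recovers $2w_{n_1} - wi_n - w_n = -(2w_{n_2} - wi_n - w_n)$, and hence the equality of the two absolute values, i.e.\ $n_1 \equiv n_2$. The only step that requires any care is the ``opposite signs'' argument in the forward direction, which relies critically on $w_{n_1} \neq w_{n_2}$ to rule out the trivial case in which the two absolute-value arguments are already equal. Once that observation is isolated, the result becomes a direct algebraic corollary of Equations~1 and~2, parallel in structure to Theorem~\ref{equation} and essentially free of new combinatorial content.
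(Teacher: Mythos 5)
Your proof is correct, but it takes a genuinely different (and cleaner) route than the paper. The paper proves this theorem by declaring it ``completely analogous'' to Theorem~\ref{equation}, whose proof first reduces the comparison $n_1 \gg n_2$ to a comparison of \emph{products} $u_{n_1}d_{n_1}$ vs.\ $u_{n_2}d_{n_2}$ (Lemma~\ref{lemma_UD}, via a parabola/vertex argument), and then runs a fairly long algebraic chain --- including a substitution $d = \frac{d_{n_1}}{d_{n_1}-d_{n_2}}$ and a multiplication by $d_{n_1}-d_{n_2}$ that itself needs $w_{n_1} > w_{n_2}$ --- to turn $w_n = w_{n_1}+w_{n_2}-wi_n$ into $u_{n_1}d_{n_1} = u_{n_2}d_{n_2}$. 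You instead work directly with the quantity that $\equiv$ is actually defined by: the identity $\mathit{Down}(n') - \mathit{Up}(n') = 2w_{n'} - wi_n - w_n$ (an immediate consequence of Equations~1 and~2 plus uniformity of the individual weights) turns $n_1 \equiv n_2$ into $|2w_{n_1}-wi_n-w_n| = |2w_{n_2}-wi_n-w_n|$, and the hypothesis $w_{n_1} > w_{n_2}$ makes the two arguments distinct, so equality of absolute values forces them to be opposites, which rearranges to exactly $w_n = w_{n_1}+w_{n_2}-wi_n$; the converse is a direct substitution giving the two arguments as $w_{n_1}-w_{n_2}$ and $w_{n_2}-w_{n_1}$. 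Your approach is purely linear, avoids the product lemma and the quadratic machinery entirely, and makes explicit the one place where $w_{n_1} \neq w_{n_2}$ is genuinely needed; what the paper's product formulation buys in exchange is a single reusable lemma that also drives the strict-inequality case of Theorem~\ref{equation} and the optimality characterization of Corollary~\ref{Cor_prod}.
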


Theorem \ref{equation} is useful when one node is heavier than the other. In the case that both nodes have the same weight, then the following theorem guarantees that they both equally divide the MET in all situations.

\begin{theorem}
\label{equation2}
Let $T = (N, E, M)$ be a MET where $\forall n,n' \in N, wi_n=wi_{n'}$ and $\forall n \in N, wi_n > 0$, and let $n_1, n_2 \in Sea(T)$ be two nodes, if $w_{n_1} = w_{n_2}$ then $n_1 \equiv n_2$.
\end{theorem}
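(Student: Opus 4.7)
The plan is a direct calculation using Equations 1 and 2 together with the uniformity hypothesis that $wi_{n'}=wi_{n''}$ for all $n',n''\in N$. First I would apply Equation 2 to both $n_1$ and $n_2$: since $w_{n_i}=\mathit{Down}(n_i)+wi_{n_i}$ for $i=1,2$, the assumption $w_{n_1}=w_{n_2}$ combined with $wi_{n_1}=wi_{n_2}$ yields $\mathit{Down}(n_1)=\mathit{Down}(n_2)$.

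Next I would apply Equation 1 to both nodes: writing $w_n=\mathit{Up}(n_i)+\mathit{Down}(n_i)+wi_{n_i}$ for $i=1,2$, the left-hand sides are identical (they both equal the weight of the whole MET rooted at $n$) and on the right the terms $\mathit{Down}(n_i)$ and $wi_{n_i}$ coincide for both $i$, so one reads off $\mathit{Up}(n_1)=\mathit{Up}(n_2)$. Combining the two equalities gives $\mathit{Down}(n_1)-\mathit{Up}(n_1)=\mathit{Down}(n_2)-\mathit{Up}(n_2)$, whence $|\mathit{Down}(n_1)-\mathit{Up}(n_1)|=|\mathit{Down}(n_2)-\mathit{Up}(n_2)|$, which is by definition $n_1\equiv n_2$.

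I do not foresee any real obstacle: the argument is purely algebraic and proceeds by a single cancellation followed by a subtraction. The positivity hypothesis $wi_n>0$ plays no role in this proof (it matters only for Theorems~\ref{equation} and~\ref{prop}, where strict inequalities of weights must be ruled out); similarly, the case $n_1=n_2$ is trivially covered. The only point worth remarking on is that Equations~1 and~2 require the nodes under consideration to be undefined, which is exactly guaranteed by the hypothesis $n_1,n_2\in Sea(T)$, so the equations are legitimately applicable.
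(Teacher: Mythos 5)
Your proof is correct and follows essentially the same route as the paper's: both use Equation 2 with the uniformity of $wi$ to get $\mathit{Down}(n_1)=\mathit{Down}(n_2)$, then Equation 1 to conclude $\mathit{Down}(n_1)-\mathit{Up}(n_1)=\mathit{Down}(n_2)-\mathit{Up}(n_2)$ and take absolute values. Your side remarks (that $wi_n>0$ is not needed here and that membership in $Sea(T)$ licenses the use of Equations 1 and 2) are also accurate.
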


\begin{corollary}
\label{Cor_prod}
Given a MET $T = (N, E, M)$ where $\forall n,n' \in N, wi_n=wi_{n'}$ and $\forall n \in N, wi_n > 0$, and given a node $n \in \mathit{Sea}(T)$, then $n$ optimally divides $\mathit{Sea}(T)$ if and only if $\mathit{Up}(n)=\mathit{Down}(n)$.
\end{corollary}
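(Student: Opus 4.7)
The plan is to reduce the corollary to Theorems~\ref{equation}, \ref{prop} and~\ref{equation2} by first translating the condition $\mathit{Up}(n) = \mathit{Down}(n)$ into an equivalent identity on subtree weights. Letting $r$ denote the root of $T$ and writing $wi$ for the common positive individual weight of all nodes, Equation~1 gives $w_r = \mathit{Up}(n) + \mathit{Down}(n) + wi$, while Equation~2 gives $\mathit{Down}(n) = w_n - wi$. Eliminating the two occurrences of $\mathit{Up}$ and $\mathit{Down}$ then shows that $\mathit{Up}(n) = \mathit{Down}(n)$ is equivalent to the clean weight identity $w_r = 2 w_n - wi$, which is precisely the form on which the preceding theorems operate.

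For the $(\Leftarrow)$ direction, I would assume $w_r = 2 w_n - wi$ and show that $n$ is weakly preferred to every other candidate $n' \in \mathit{Sea}(T)$. If $w_{n'} = w_n$, Theorem~\ref{equation2} immediately yields $n \equiv n'$. If $w_{n'} > w_n$, then $w_{n'} + w_n - wi > 2 w_n - wi = w_r$, so Theorem~\ref{equation} (applied with $n_1 = n'$, $n_2 = n$) gives $n' \not\gg n$, and Theorem~\ref{prop} then rules out $n' \equiv n$, so the trichotomy forces $n \gg n'$. The case $w_{n'} < w_n$ is symmetric. Hence $n$ is a $\gg$-maximum and therefore optimally divides $\mathit{Sea}(T)$.

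For the $(\Rightarrow)$ direction I would argue by contrapositive: if $\mathit{Up}(n) \neq \mathit{Down}(n)$, then $w_r \neq 2 w_n - wi$, and any candidate $n^*$ satisfying $\mathit{Up}(n^*) = \mathit{Down}(n^*)$ would, by the computation of the previous paragraph, satisfy $n^* \gg n$, contradicting the supposed optimality of $n$. The main obstacle is scoping this direction carefully: it is vacuous unless some node in $\mathit{Sea}(T)$ actually attains the perfect split, since in its absence a $\gg$-maximum need not satisfy $\mathit{Up} = \mathit{Down}$ (consider a root with two leaf children of equal weight). I would therefore make the existence of a perfect divider explicit in the full proof, observing that the corollary is invoked precisely in the algorithmic regime where such a node is sought, with Theorem~\ref{equation} providing the fallback comparison whenever no such node exists.
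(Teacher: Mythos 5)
Your argument is correct and reaches the same conclusion as the paper, but by a genuinely different route. The paper's proof goes through Lemma~\ref{lemma_UD}: it identifies optimality with maximizing the product $\mathit{Up}(n)\cdot\mathit{Down}(n)$ under the constraint $\mathit{Up}(n)+\mathit{Down}(n)=\mathit{sum}$ (a constant over $\mathit{Sea}(T)$), and then locates the maximum of the parabola $u\cdot(\mathit{sum}-u)$ by differentiation, landing at $u=\mathit{sum}/2$, i.e.\ $\mathit{Up}(n)=\mathit{Down}(n)$. You instead rewrite the perfect-split condition as the weight identity $w_r=2w_n-wi$ (with $r$ the root) and dispatch the comparisons with Theorems~\ref{equation}, \ref{prop} and~\ref{equation2} plus trichotomy, staying entirely within the discrete comparison machinery and avoiding the calculus detour; your $(\Leftarrow)$ direction could even be shortened, since $w_r=2w_n-wi$ is literally $\mathit{Up}(n)=\mathit{Down}(n)$, i.e.\ $|\mathit{Down}(n)-\mathit{Up}(n)|=0$, which is trivially the minimum possible. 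More importantly, your caveat about the $(\Rightarrow)$ direction is a sharp observation that the paper's proof silently elides: if ``optimally divides'' is read as ``is a $\gg$-maximum of $\mathit{Sea}(T)$'', the forward implication fails whenever no node attains the perfect split (your root-with-two-leaves example), because the achieved maximum of the product need not sit at the vertex of the parabola. The paper escapes this only because the body text defines an ``optimal division'' as the condition $\mathit{Up}(n)=\mathit{Down}(n)$ itself, so the corollary is a statement of the ideal objective rather than of $\gg$-maximality; your explicit existence hypothesis is the honest way to state (and prove) the $\gg$-maximality reading.
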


While Corollary~\ref{Cor_prod} states the objective of optimal D\&Q (finding a node $n$ such that $\mathit{Up}(n)=\mathit{Down}(n)$), Theorems \ref{equation} and \ref{equation2} provide a method to approximate this objective (finding a node $n$ such that $|\mathit{Down}(n) - \mathit{Up}(n)|$ is minimum in $\mathit{Sea}(T)$).

\subsubsection{An algorithm for Optimal D\&Q.}

Theorems~\ref{equation} and \ref{prop} provide equation $w_n \geq w_{n_1} + w_{n_2} - wi_n$ to compare two nodes $n_1, n_2$ by efficiently determining $n_1 \gg n_2$, $n_1 \equiv n_2$ or $n_1 \ll n_2$. However, with only this equation, we should compare all nodes to select the best of them (i.e., $n$ such that $\nexists n', n' \gg n$). Hence, in this section we provide an algorithm that allows us to find the best node in a MET with a minimum set of node comparisons. 

Given a MET, Algorithm~\ref{alg_camino} efficiently determines the best node to divide $\mathit{Sea}(T)$ by half (in the following the \emph{optimal node}). In order to find this node, the algorithm does not need to compare all nodes in the MET. It follows a path of nodes from the root to the optimal node which is closer to the root producing a minimum set of comparisons.


\begin{algorithm}
\caption{Optimal D\&Q ---SelectNode in Algorithm~\ref{algoritmoGeneral}---}
\label{alg_camino}
\begin{algorithmic}
\STATE \textbf{Input:} A MET $T = (N, E, M)$ whose root is $n \in N$,\\
\STATE ~~~~~~~~~~$\forall n',n'' \in N, wi_{n'}=wi_{n''}$ and $\forall n' \in N, wi_{n'} > 0$
\STATE \textbf{Output:} A node $n_{\mathit{Optimal}} \in N$\\
\STATE \textbf{Preconditions:} $\exists n' \in N$, $M(n') = \mathit{Undefined}$

\medskip

\textbf{begin}
\smallskip
\STATE (1) ~Candidate $= n$
\STATE (2) ~$\mathbf{do}$
\STATE (3) ~~~~~Best = Candidate
\STATE (4) ~~~~~Children = $\{m ~|~($Best $ \rightarrow m) \in E\}$
\STATE (5) ~~~~~$\mathbf{if}$ (Children = $\emptyset$) $\mathbf{then~return}$ Best
\STATE (6) ~~~~~Candidate = $n' \mid \forall n''$ with $n',n'' \in $ Children$,$ $w_{n'} \geq w_{n''}$
\STATE (7) ~$\mathbf{while}$ $(w_{\mathit{Candidate}} > \frac{w_n}{2})$
\STATE (8) ~$\mathbf{if}$ $(M($Best$) = \mathit{Wrong})$ $\mathbf{then}$ $\mathbf{return}$ Candidate
\STATE (9) ~$\mathbf{if}$ $(w_n \geq w_{\mathit{Best}} + w_{\mathit{Candidate}} - wi_n)$ $\mathbf{then ~return}$ Best
\STATE (10) ~~~~~~~~~~~~~~~~~~~~~~~~~~~~~~~~~~~~~~~~~~~~~~$\mathbf{else ~return}$ Candidate
\smallskip
\STATE \textbf{end}

\end{algorithmic}
\end{algorithm}

\begin{example} \label{ex_path}
Consider the MET in Figure~\ref{camino2} where $\forall n \in N, wi_n = 1$ and $M(n) = \mathit{Undefined}$.
\begin{figure}
\centering
\includegraphics[width=4.5cm]{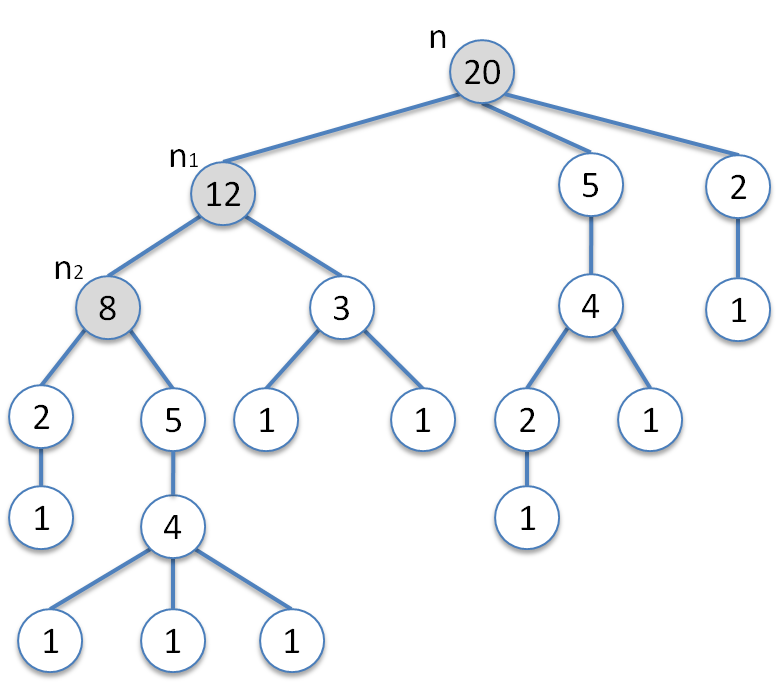}
\caption{Defining a path in a MET to find the optimal node}
\label{camino2}
\end{figure}
Observe that Algorithm~\ref{alg_camino} only needs to apply the equation in Theorem~\ref{equation} once to identify an optimal node. Firstly, it traverses the MET top-down from the root selecting at each level the heaviest node until we find a node whose weight is smaller than the half of the MET ($\frac{w_n}{2}$), thus, defining a path in the MET that is colored in gray. 
Then, the algorithm uses the equation $w_n \geq w_{n_1} + w_{n_2} - wi_n$  to compare nodes $n_1$ and $n_2$. Finally, the algorithm selects $n_1$. 
\end{example}

In order to prove the correctness of Algorithm~\ref{alg_camino}, we need to prove that (1) the node returned is really an optimal node, and (2) this node will always be found by the algorithm (i.e., it is always in the path defined by the algorithm).

The first point can be proven with Theorems~\ref{equation}, \ref{prop} and \ref{equation2}. 
The second point is the key idea of the algorithm and it relies on an interesting property of the path defined: while defining the path in the MET, only four cases are possible, and all of them coincide in that the subtree of the heaviest node will contain an optimal node.

In particular, when we use Algorithm~\ref{alg_camino} and compare two nodes $n_1,n_2$ in a MET whose root is $n$, we find four possible cases: \\

\noindent$~~~~~${\bf Case 1:} $n_1$ and $n_2$ are brothers.\\
$~~~~~~${\bf Case 2:} $w_{n_1} > w_{n_2} ~\land~ w_{n_2} > \frac{w_n}{2}$.\\
$~~~~~~${\bf Case 3:} $w_{n_1} > \frac{w_n}{2} ~\land~ w_{n_2} \leq \frac{w_n}{2}$.\\
$~~~~~~${\bf Case 4:} $w_{n_1} > w_{n_2} ~\land~ w_{n_1} \leq \frac{w_n}{2}$.
\bigskip
%

\begin{figure}
\centering
\includegraphics[width=2cm]{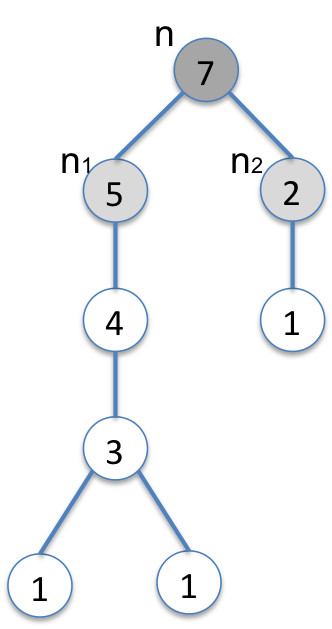}~
\includegraphics[width=2cm]{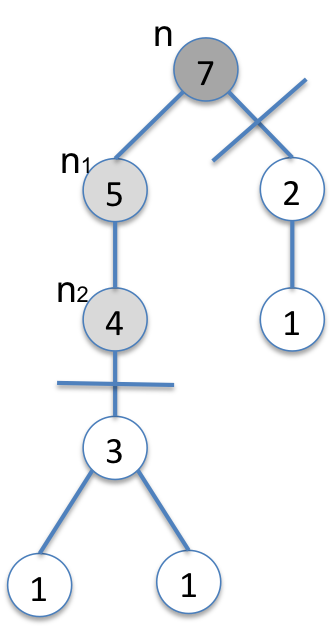}~
\includegraphics[width=2cm]{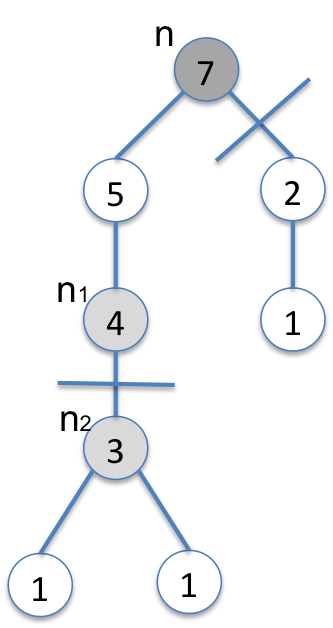}~
\includegraphics[width=2.2cm]{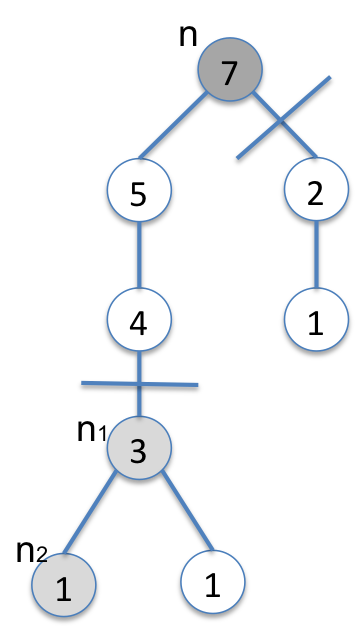}\\
~~~Case 1 ~~~~~~~~ Case 2 ~~~~~~~~ Case 3 ~~~~~~~~ Case 4
\caption{Determining the best node in a MET (four possible cases)}
\label{camino}
\end{figure}

We have proven---the individual proofs are part of the proof of Theorem~\ref{theo_corr}---that in cases 1 and 4, the heaviest node is better (i.e., if $w_{n_1}>w_{n_2}$ then $n_1 \gg n_2$); In case 2, the lightest node is better; and in case 3, the best node must be determined with the equations of Theorems~\ref{equation}, \ref{prop} and \ref{equation2}. Observe that these results allow the algorithm to determine the path to the optimal node that is closer to the root. 
For instance, in Example~\ref{ex_path} case 1 is used to select a child, e.g., node 12 instead of node 5 or node 2, and node 8 instead of node 3.  
Case 2 is used to go down and select node 12 instead of node 20. Case 4 is used to stop going down at node 8 because it is better than all its descendants. And it is also used to determine that nodes 2, 3 and 5 are better than all their descendants. Finally, case 3 is used to select the optimal node, 12 instead of 8. 
Note that D\&Q could have selected node 8 that is equally close to $\frac{20}{2}$ than node 12; but it is suboptimal because $\mathit{Up}(8)=12$ and $\mathit{Down}(8)=7$ whereas $\mathit{Up}(12)=8$ and $\mathit{Down}(12)=11$.

The correctness of Algorithm~\ref{alg_camino} is stated by the following theorem.

\begin{theorem}[Correctness]
\label{theo_corr}
Let $T = (N, E, M)$ be a MET where $\forall n,n' \in N, wi_n=wi_{n'}$ and $\forall n \in N, wi_n > 0$, then the execution of Algorithm~\ref{alg_camino} with $T$ as input always terminates producing as output a node $n \in Sea(T)$ such that $\nexists n' \in Sea(T) \mid n' \gg n$.
\end{theorem}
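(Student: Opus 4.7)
The plan is to split the proof into termination and optimality, and for the latter to reason along the path $n = p_0, p_1, p_2, \ldots$ that Algorithm~\ref{alg_camino} implicitly traces, where each $p_{i+1}$ is a heaviest child of $p_i$. Termination is straightforward: every loop iteration assigns to $\mathit{Candidate}$ a strict descendant of the previous $\mathit{Candidate}$ (line~6) and the tree is finite, so the loop must exit either at line~5 (a leaf is reached) or at line~7 (when $w_{\mathit{Candidate}} \leq w_n/2$). This also ensures the algorithm always returns a node of $N$.

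Next I would formalize the four ``cases'' the paper advertises as short lemmas, each a consequence of Theorems~\ref{equation}, \ref{prop} and~\ref{equation2}. The master inequality, valid when $w_{n_1} > w_{n_2}$, is $n_1 \gg n_2$ iff $w_n > w_{n_1} + w_{n_2} - wi_n$. Case~1 (siblings) follows because decomposing the parent's weight gives $w_{n_1} + w_{n_2} \leq w_n$, and $wi_n > 0$ closes the gap. Case~4 ($w_{n_1} \leq w_n/2$) follows from $w_{n_1} + w_{n_2} < 2w_{n_1} \leq w_n$. Case~2 (ancestor-descendant with $w_{n_2} > w_n/2$) uses the fact that $w_{n_1} \geq w_{n_2} + wi_n$ because $n_1$ strictly contains $n_2$, giving $w_{n_1} + w_{n_2} \geq 2w_{n_2} + wi_n > w_n + wi_n$, which flips the inequality and selects the descendant. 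Only Case~3 ($w_{n_1} > w_n/2 \geq w_{n_2}$) is genuinely undetermined, and that is precisely what line~9 evaluates.

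With those lemmas in hand, the structural claim is that an optimal node lies among $\{p_k, p_{k+1}\}$, the final $\mathit{Best}$ and $\mathit{Candidate}$ at loop exit. First, any node $m$ off the path sits in the subtree of some sibling $s$ of $p_{i+1}$; Case~1 gives $p_{i+1} \gg s$ (or $\equiv$), and an induction inside $\mathrm{subtree}(s)$, using Case~1 at every branching to prune to a heaviest-child chain and Cases~2 and~4 along that chain, shows that $s$ dominates all of its descendants, so transitivity of $\gg$ yields $p_{i+1}$ dominating $m$. Second, along the path itself, Case~2 gives $p_{i+1} \gg p_i$ whenever $w_{p_{i+1}} > w_n/2$, so the best among $\{p_0, \ldots, p_k\}$ is $p_k$. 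Line~9 then compares $p_k$ against $p_{k+1}$ via the Case~3 inequality, and line~8 handles the corner case where $p_0 = n$ is the root marked Wrong and therefore not in $\mathit{Sea}(T)$, forcing $\mathit{Candidate}$ to be returned.

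The step I expect to cost the most work is the off-path domination argument, because a single sibling subtree can contain nodes whose weights straddle $w_n/2$, so neither Case~2 nor Case~4 alone suffices. The clean way I see is to run an induction inside each such subtree, alternating Case~1 at every branching (to restrict to a heaviest-child chain) with Cases~2 and~4 according to which side of $w_n/2$ the current weight lies, and chaining the resulting $\gg$ and $\equiv$ relations back up to $p_{i+1}$. Getting that case split tight, and in particular handling the boundary $w = w_n/2$ via Theorem~\ref{equation2}, is the main technical obstacle.
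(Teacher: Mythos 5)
Your proposal is correct and follows essentially the same route as the paper's proof: termination by strict descent in the tree, the four cases established as lemmas from Theorems~\ref{equation}, \ref{prop} and~\ref{equation2}, off-path nodes dominated via the sibling comparison plus a descendant lemma, on-path nodes via Case~2, and the final $\mathit{Best}$/$\mathit{Candidate}$ decision via the line-(9) inequality. The one ``technical obstacle'' you anticipate --- sibling subtrees containing weights that straddle $w_n/2$ --- does not actually arise, since any sibling $s$ of the heaviest child $p_{i+1}$ satisfies $2w_s \leq w_s + w_{p_{i+1}} \leq w_n$, hence $w_s \leq w_n/2$ and Case~4 alone dominates every descendant of $s$; this is precisely the paper's Lemma~\ref{lem_caminoa2}.
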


Algorithm~\ref{alg_camino} always returns a single optimal node. However, the equation in Theorem~\ref{equation} in combination with the equation in Theorem~\ref{prop} can be used to identify all optimal nodes in the MET.
This is implemented in Algorithm~\ref{alg_camino2} that is complete, and thus it returns nodes 2 and 4 in the MET of Figure~\ref{5nodos} where D\&Q can only detect node 2 as optimal.

%
%
%

\begin{algorithm}
\caption{Optimal D\&Q (Complete)  ---SelectNode in Algorithm~\ref{algoritmoGeneral}---}
\label{alg_camino2}
\begin{algorithmic}
\STATE \textbf{Input:} A MET $T = (N, E, M)$ whose root is $n \in N$,\\
\STATE ~~~~~~~~~~$\forall n',n'' \in N, wi_{n'}=wi_{n''}$ and $\forall n' \in N, wi_{n'} > 0$
\STATE \textbf{Output:} A set of nodes $O \subseteq N$\\
\STATE \textbf{Preconditions:} $\exists n' \in N$, $M(n') = \mathit{Undefined}$

\medskip

\textbf{begin}
\smallskip
\STATE (1) ~Candidate $= n$
\STATE (2) ~$\mathbf{do}$
\STATE (3) ~~~~~Best = Candidate
\STATE (4) ~~~~~Children = $\{m ~|~($Best $ \rightarrow m) \in E\}$
\STATE (5) ~~~~~$\mathbf{if}$ (Children = $\emptyset$) $\mathbf{then~return}$ $\{$Best$\}$
\STATE (6) ~~~~~Candidate = $n' \mid \forall n''$ with $n',n'' \in $ Children$,$ $w_{n'} \geq w_{n''}$
\STATE (7) ~$\mathbf{while}$ $(w_{\mathit{Candidate}} > \frac{w_n}{2})$
\STATE (8) ~Candidates = $\{n' \mid \forall n''$ with $n',n'' \in $ Children$,$ $w_{n'} \geq w_{n''}\}$
\STATE (9) ~$\mathbf{if}$ $(M($Best$) = \mathit{Wrong})$ $\mathbf{then}$ $\mathbf{return}$ Candidates
\STATE (10) ~$\mathbf{if}$ $(w_n > w_{\mathit{Best}} + w_{\mathit{Candidate}} - wi_n)$ $\mathbf{then ~return}$ $\{$Best$\}$\\
\STATE (11) ~$\mathbf{if}$ $(w_n = w_{\mathit{Best}} + w_{\mathit{Candidate}} - wi_n)$ $\mathbf{then ~return}$ $\{$Best$\} ~\cup$ Candidates\\
\STATE (12) ~~~~~~~~~~~~~~~~~~~~~~~~~~~~~~~~~~~~~~~~~~~~~~~$\mathbf{else ~return}$ Candidates
\smallskip
\STATE \textbf{end}

\end{algorithmic}
\end{algorithm}

\subsection{Debugging METs where nodes can have different individual weights in $\cR^+ \cup \{0\}$}
\label{sec_variableMET}


In this section we generalize Divide and Query to the case where nodes can have different individual weights and these weights can be any value greater or equal to zero. As shown in Figure~\ref{fig_decimal}, in this general case traditional D\&Q fails to identify the optimal node (it selects node $n_1$ but the optimal node is $n_2$). The algorithm presented in the previous section is also suboptimal when the individual weights can be different. For instance, in the MET of Figure~\ref{fig_decimal}, it would select node $n_3$. For this reason, in this section we introduce Algorithm~\ref{alg_general}, a general algorithm able to identify an optimal node in all cases. 
It does not mean that Algorithm~\ref{alg_camino} is useless. Algorithm~\ref{alg_camino} is optimal when all nodes have the same weight, and in that case, it is more efficient than Algorithm~\ref{alg_general}.
Theorem~\ref{theo_corr2} ensures the finiteness and correctness of Algorithm~\ref{alg_general}.

\begin{algorithm}
\caption{Optimal D\&Q General ---SelectNode in Algorithm~\ref{algoritmoGeneral}---}
\label{alg_general}
\begin{algorithmic}
\STATE \textbf{Input:} A MET $T = (N, E, M)$ whose root is $n \in N$ and $\forall n' \in N, wi_{n'} \geq 0$
\STATE \textbf{Output:} A node $n_{\mathit{Optimal}} \in N$\\
\STATE \textbf{Preconditions:} $\exists n' \in N$, $M(n') = \mathit{Undefined}$

\medskip

\textbf{begin}
\smallskip
\STATE (1) ~Candidate = $n$
\STATE (2) ~$\mathbf{do}$
\STATE (3) ~~~~~Best = Candidate
\STATE (4) ~~~~~Children = $\{m ~|~($Best $ \rightarrow m) \in E\}$
\STATE (5) ~~~~~$\mathbf{if}$ (Children = $\emptyset$) $\mathbf{then~return}$ Best
\STATE (6) ~~~~~Candidate = $n' \mid \forall n''$ with $n',n'' \in $ Children, $w_{n'} \geq w_{n''}$
\STATE (7) ~$\mathbf{while}$ $(w_{\mathit{Candidate}} - \frac{wi_{\mathit{Candidate}}}{2} > \frac{w_{n}}{2})$
\STATE (8) ~Candidate = $n' \mid \forall n''$ with $n',n'' \in $ Children$, w_{n'} - \frac{wi_{n'}}{2} \geq w_{n''} - \frac{wi_{n''}}{2}$
\STATE (9) ~$\mathbf{if}$ $(M($Best$) = \mathit{Wrong})$ $\mathbf{then}$ $\mathbf{return}$ Candidate
\STATE (10) ~$\mathbf{if}$ $(w_n \geq w_{\mathit{Best}} + w_{\mathit{Candidate}} - \frac{wi_{\mathit{Best}}}{2} - \frac{wi_{\mathit{Candidate}}}{2})$ $\mathbf{then ~return}$ Best\\
\STATE (11) ~~~~~~~~~~~~~~~~~~~~~~~~~~~~~~~~~~~~~~~~~~~~~~~~~~~~~~~~~~~~~~~~~~$\mathbf{else ~return}$ Candidate\\
\smallskip
\STATE \textbf{end}

\end{algorithmic}
\end{algorithm}

\begin{theorem}[Correctness]
\label{theo_corr2}
Let $T = (N, E, M)$ be a MET where $\forall n \in N, wi_n\geq 0$, then the execution of Algorithm~\ref{alg_general} with $T$ as input always terminates producing as output a node $n \in Sea(T)$ such that $\nexists n' \in Sea(T) \mid n' \gg n$.
\end{theorem}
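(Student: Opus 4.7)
The plan is to mirror the structure of Theorem~\ref{theo_corr} while reinterpreting every occurrence of $w_{n'}$ by the ``adjusted weight'' $w_{n'} - \frac{wi_{n'}}{2}$, which is exactly the quantity Algorithm~\ref{alg_general} uses in its loop guard (line~7), in the tie-breaking comparison among siblings (line~8), and in the final test (line~10). The justification for this change of variable comes from Equations 1 and 2: a node $n'$ satisfies $\mathit{Up}(n')=\mathit{Down}(n')$ exactly when $w_{n'} - \frac{wi_{n'}}{2} = \frac{w_n}{2}$. Hence the optimality criterion $n_1 \gg n_2$ becomes, under the hypothesis $w_{n_1} - \frac{wi_{n_1}}{2} > w_{n_2} - \frac{wi_{n_2}}{2}$, equivalent to $w_n > w_{n_1} + w_{n_2} - \frac{wi_{n_1}}{2} - \frac{wi_{n_2}}{2}$. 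I would first restate and prove this as a generalized version of Theorems~\ref{equation}, \ref{prop} and \ref{equation2} by a direct algebraic manipulation analogous to the one used in the all-equal-weights case.

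Termination is the easy step. The loop body at lines~3--6 strictly descends one edge in $T$ on each iteration; since $T$ is finite, after at most $\mathit{depth}(T)$ iterations either \textit{Children} is empty and the algorithm returns at line~5, or the guard $w_{\mathit{Candidate}} - \frac{wi_{\mathit{Candidate}}}{2} > \frac{w_n}{2}$ becomes false (in particular it will fail at any leaf). The tail (lines~8--11) is constant time, so the algorithm terminates.

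For correctness I would reproduce the four-case analysis used for Theorem~\ref{theo_corr}, but with the adjusted weights. Concretely, for any two nodes $n_1,n_2 \in \mathit{Sea}(T)$ appearing at consecutive iterations, one of the following holds: (1) $n_1, n_2$ are siblings; (2) both have adjusted weight $> \frac{w_n}{2}$; (3) $n_1$ has adjusted weight $> \frac{w_n}{2}$ and $n_2$ has adjusted weight $\leq \frac{w_n}{2}$; (4) both have adjusted weight $\leq \frac{w_n}{2}$. Using the generalized equation above together with the fact that along a root-to-leaf path, $\mathit{Down}(\cdot)$ is monotonically non-increasing and $\mathit{Up}(\cdot)$ is monotonically non-decreasing, I would show that in cases~1 and~4 the heaviest-adjusted node wins, in case~2 the lightest-adjusted node wins, and case~3 requires the explicit comparison implemented at line~10. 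This proves that an optimal node always lies on the path the algorithm descends, and that the final comparison between \textit{Best} (the last node whose adjusted weight exceeded $\frac{w_n}{2}$) and \textit{Candidate} (the heaviest-adjusted child) correctly identifies it. The special handling at line~9 for the case $M(\textit{Best}) = \mathit{Wrong}$ is justified because \textit{Best} is then not in $\mathit{Sea}(T)$ and must be discarded.

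The main obstacle I anticipate is handling nodes with $wi_{n'} = 0$ uniformly. Such nodes belong to $\mathit{Sea}(T)$ but contribute nothing to the division and cannot be the bug, so they could tempt the descent to choose them over an equally heavy sibling of positive weight without any real progress. The adjusted weight $w_{n'} - \frac{wi_{n'}}{2}$ coincides with $w_{n'}$ for such nodes, so line~6's tie-breaking may pick one of them; I would verify that in every such tie, the subtree rooted at the selected child still contains an optimal node, by showing that swapping to any sibling of equal (adjusted) weight preserves the invariant $\min_{n' \in T_{\text{subtree}}} |\mathit{Down}(n') - \mathit{Up}(n')| = \min_{n' \in \mathit{Sea}(T)} |\mathit{Down}(n') - \mathit{Up}(n')|$. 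Once this invariant is established by induction on the path length, combining it with the generalized versions of Theorems~\ref{equation}--\ref{equation2} yields $\nexists n' \in \mathit{Sea}(T) \mid n' \gg n$ for the returned node $n$, completing the proof.
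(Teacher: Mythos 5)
Your proposal is correct and follows the same skeleton as the paper's proof: termination by strict descent in a finite tree, followed by induction on the iterations showing that the surviving \textit{Best}/\textit{Candidate} dominates everything outside the current subtree, with the final test at line~10 resolving the last comparison. Where you genuinely diverge is in how the supporting algebra is organized. The paper never states your generalized equation; instead it proves a sequence of eleven special-case lemmas (Lemmas~\ref{lem_condicionBucle} through~\ref{lem_camino2e}), each manipulating $\mathit{Up}$ and $\mathit{Down}$ directly under hypotheses tailored to one program point (the loop guard, siblings after the loop exits, the parent--child comparison at line~10, descendants of discarded siblings). Your single observation --- that Equations 1 and 2 give $\mathit{Down}(n')-\mathit{Up}(n') = 2\bigl(w_{n'}-\frac{wi_{n'}}{2}\bigr)-w_n$, hence $n_1\gg n_2$ iff the adjusted weight $w_{n_1}-\frac{wi_{n_1}}{2}$ is strictly closer to $\frac{w_n}{2}$ than $w_{n_2}-\frac{wi_{n_2}}{2}$, which under the stated ordering is exactly $w_n > w_{n_1}+w_{n_2}-\frac{wi_{n_1}}{2}-\frac{wi_{n_2}}{2}$ --- is valid and subsumes all of them; it makes lines~7, 8 and 10 of Algorithm~\ref{alg_general} read off directly as ``distance from $\frac{w_n}{2}$'' computations and would shorten the appendix considerably. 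The one detail your sketch glosses over, which the paper isolates as Lemmas~\ref{lem_contradiccionConD} and~\ref{lem_contradiccionConD2}, is that line~6 selects children by raw weight $w$ while your analysis is in adjusted weight: you must check that a child whose adjusted weight exceeds $\frac{w_n}{2}$ is necessarily the raw-heaviest child, and that if the raw-heaviest child has adjusted weight $\leq\frac{w_n}{2}$ then so do all its siblings, so that the loop exits consistently and line~8's re-selection by adjusted weight is the right comparison; both facts follow from the sibling bound $u_{n_2}\geq w_{n_3}$. Finally, your concern about zero-weight ties is handled in the paper by weakening the inductive invariant to $\mathit{Best}\gg n' \vee \mathit{Best}\equiv n'$, which is what your subtree-minimum invariant amounts to and which suffices because the theorem only claims $\nexists n'$ with $n'\gg n$.
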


\subsection{Debugging METs where nodes can have different individual weights in $\cR^+$}
\label{sec_variableMET2}

In the previous section we provided an algorithm that optimally selects an optimal node of the MET with a minimum set of node comparisons. But this algorithm is not complete due to the fact that we allow the nodes to have an individual weight of zero. For instance, when all nodes have an individual weight of zero, Algorithm~\ref{alg_general} returns a single optimal node, but it is not able to find all optimal nodes.

Given a node (say $n$), the difference between having an individual weight of zero, $wi_n$, and having a (total) weight of zero, $w_n$, should be clear. The former means that this node did not cause the bug, the later means that none of the descendants of this node (neither the node itself) caused the bug. Surprisingly, the use of nodes with individual weights of zero has not been exploited in the literature. Assigning a (total) weight of zero to a node has been used for instance in the technique called \emph{Trusting} \cite{LC07}. This technique allows the user to trust a method. When this happens all the nodes related to this method and their descendants are pruned from the tree (i.e., these nodes have a (total) weight of zero).

If we add the restriction that nodes cannot be assigned with an individual weight of zero, then we can refine Algorithm~\ref{alg_general} to ensure completeness. This refined version is Algorithm~\ref{alg_general2}.

\begin{algorithm}[h!]
\caption{Optimal D\&Q General (Complete) ---SelectNode in Algorithm~\ref{algoritmoGeneral}---}
\label{alg_general2}
\begin{algorithmic}
\STATE \textbf{Input:} A MET $T = (N, E, M)$ whose root is $n \in N$ and $\forall n' \in N, wi_{n'} > 0$
\STATE \textbf{Output:} A set of nodes $O \subseteq N$\\
\STATE \textbf{Preconditions:} $\exists n' \in N$, $M(n') = \mathit{Undefined}$

\medskip

\textbf{begin}
\smallskip
\STATE (1) ~Candidate = $n$
\STATE (2) ~$\mathbf{do}$
\STATE (3) ~~~~~Best = Candidate
\STATE (4) ~~~~~Children = $\{m ~|~($Best $ \rightarrow m) \in E\}$
\STATE (5) ~~~~~$\mathbf{if}$ (Children = $\emptyset$) $\mathbf{then~return}$ $\{$Best$\}$
\STATE (6) ~~~~~Candidate = $n' \mid \forall n''$ with $n',n'' \in $ Children, $w_{n'} \geq w_{n''}$
\STATE (7) ~$\mathbf{while}$ $(w_{\mathit{Candidate}} - \frac{wi_{\mathit{Candidate}}}{2} > \frac{w_{n}}{2})$
\STATE (8) ~Candidates = $\{n' \mid \forall n''$ with $n',n'' \in $ Children$, w_{n'} - \frac{wi_{n'}}{2} \geq w_{n''} - \frac{wi_{n''}}{2}\}$
\STATE (9) ~Candidate = $n' \in $ Candidates
\STATE (10) ~$\mathbf{if}$ $(M($Best$) = \mathit{Wrong})$ $\mathbf{then}$ $\mathbf{return}$ Candidates
\STATE (11) ~$\mathbf{if}$ $(w_n > w_{\mathit{Best}} + w_{\mathit{Candidate}} - \frac{wi_{\mathit{Best}}}{2} - \frac{wi_{\mathit{Candidate}}}{2})$ $\mathbf{then ~return}$ $\{$Best$\}$\\
\STATE (12) ~$\mathbf{if}$ $(w_n = w_{\mathit{Best}} + w_{\mathit{Candidate}} - \frac{wi_{\mathit{Best}}}{2} - \frac{wi_{\mathit{Candidate}}}{2})$ $\mathbf{then}$\\
\STATE ~~~~~~~~~~~~~~~~~~~~~~~~~~~~~~~~~~~~~~~~~~~~~~~~~~~~~~~~~~~~~~~~~~$\mathbf{return}$ $\{$Best$\} ~\cup$ Candidates\\
\STATE (13) ~~~~~~~~~~~~~~~~~~~~~~~~~~~~~~~~~~~~~~~~~~~~~~~~~~~~~~~~~~~~~~~~~~$\mathbf{else ~return}$ Candidates\\
\smallskip
\STATE \textbf{end}

\end{algorithmic}
\end{algorithm}


\section{Conclusion}
\label{sec_conclusions}

During three decades, Divide \& Query has been the more efficient algorithmic debugging strategy. On the practical side, all current algorithmic debuggers implement D\&Q \cite{Bra07,Cab06,Tho06,IS10,MCC07,Mac05,Nai89,nilphd98,BerniePopePhd}, and experiments \cite{Cab05,Sil09} (see also http://users.dsic.upv.es/$^\sim$jsilva/DDJ/\#Experiments) demonstrate that it performs on average 2-36\% less questions than other strategies. On the theoretical side, because D\&Q intends a dichotomic search, it has been thought optimal with respect to the number of questions performed, and thus research on algorithmic debugging strategies has focused on other aspects such as reducing the complexity of questions.

In this work we show that in some situations current algorithms for D\&Q are incomplete and inefficient because they are not able to find all optimal nodes, and sometimes they return nodes that are not optimal. We have identified the sources of inefficiency and provided examples that show both the incompleteness and incorrectness of the technique. 

The main contribution of this work is a new algorithm for D\&Q that is optimal in all cases; including a generalization of the technique where all nodes of the ET can have different individual weights in $\cR^+ \cup \{0\}$. The algorithm has been proved terminating and correct. And a slightly modified version of the algorithm has been provided that returns all optimal solutions, thus being complete.

We have implemented the technique and experiments show that it is more efficient than all previous algorithms (see column {\tt D\&QO} in Figure~\ref{fig_comparison}). The im\-ple\-men\-ta\-tion---including the source code---and the experiments are publicly available at: http://users.dsic.upv.es/$^\backsim$jsilva/DDJ.

\bibliographystyle{abbrv}
\bibliography{biblio}

\newpage

\appendix\label{appendix}

\section{Proofs of Technical Results\label{sec-proofs}}




In this section, for the sake of clarity, we use $u_n$ and $d_n$ instead of $\mathit{Up}(n)$ and $\mathit{Down}(n)$ respectively.
Moreover, we distinguish between two kinds of METs to prove the theorems of sections~\ref{sec_sameWeight} and \ref{sec_variableMET} respectively.

\begin{definition} [Uniform MET]
A \emph{uniform MET} $T = (N, E, M)$ is a MET, where $\forall n,n' \in N, wi_n=wi_{n'}$ and $\forall n \in N, wi_n > 0$.
\end{definition}

\begin{definition} [Variable MET]
A \emph{variable MET} $T = (N, E, M)$ is a MET, where $\forall n \in N, wi_n \geq 0$.
\end{definition}

\subsection{Proofs of Theorems~\ref{equation}, \ref{prop} and~\ref{equation2}}

Here, we prove Theorems~\ref{equation}, \ref{prop} and~\ref{equation2} that are used in Algorithm~\ref{alg_camino} to compare nodes of the MET and determine which of them is better.
For the proof of Theorem~\ref{equation}, we need to prove first the following lemma.

\begin{lemma} \label{lemma_UD}
Let $T = (N, E, M)$ be a uniform MET whose root is $n \in N$, and let $n_1,n_2 \in Sea(T)$ be two nodes. Then, $n_1 \gg n_2$ if and only if $u_{n_1}*d_{n_1} > u_{n_2}*d_{n_2}$.
\end{lemma}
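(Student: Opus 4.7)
The plan is to reduce the inequality $|d_{n_1}-u_{n_1}| < |d_{n_2}-u_{n_2}|$ to the product inequality $u_{n_1}d_{n_1} > u_{n_2}d_{n_2}$ by exploiting the fact that in a uniform MET the sum $u_{n'}+d_{n'}$ is the same constant for every $n' \in Sea(T)$. This constancy is the crucial structural fact, and it follows from Equation 1: for any $n' \in Sea(T)$ we have $w_n = u_{n'} + d_{n'} + wi_{n'}$, hence $u_{n'}+d_{n'} = w_n - wi_{n'}$. Since the MET is uniform, all individual weights coincide (say, equal $w$), so $u_{n'}+d_{n'} = w_n - w$ is independent of $n'$. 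I will denote this common value by $S$.

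Once $u_{n_i}+d_{n_i}=S$ is established for $i=1,2$, the proof is a one-line algebraic identity. I will use
\[
(d_{n_i}-u_{n_i})^2 \;=\; (d_{n_i}+u_{n_i})^2 - 4\,u_{n_i}d_{n_i} \;=\; S^2 - 4\,u_{n_i}d_{n_i}.
\]
Since squaring preserves the ordering of nonnegative reals, $|d_{n_1}-u_{n_1}| < |d_{n_2}-u_{n_2}|$ is equivalent to $(d_{n_1}-u_{n_1})^2 < (d_{n_2}-u_{n_2})^2$, which by the identity above is equivalent to $S^2 - 4u_{n_1}d_{n_1} < S^2 - 4u_{n_2}d_{n_2}$, i.e., $u_{n_1}d_{n_1} > u_{n_2}d_{n_2}$. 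Both directions of the iff follow simultaneously.

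The only subtle step is the invocation of uniformity, so I would state explicitly that uniformity is used \emph{only} to guarantee that $u_{n'}+d_{n'}$ is the same constant for every undefined node; without this, the reduction from $|d-u|$ to $u\cdot d$ fails because the identity $(d-u)^2=(d+u)^2-4ud$ would involve different sums for $n_1$ and $n_2$. I do not expect any real obstacle beyond careful bookkeeping, since the computation is a standard ``same sum implies: closer to equal $\Leftrightarrow$ larger product'' argument (analogous to the AM--GM-style observation that among pairs with fixed sum, the product is maximized when the two quantities coincide). The lemma then feeds directly into the proof of Theorem~\ref{equation} by translating the comparison on $|d-u|$ into an algebraic comparison on $u\cdot d$, which can be further manipulated using Equations 1 and 2 into the stated inequality $w_n > w_{n_1}+w_{n_2}-wi_n$.
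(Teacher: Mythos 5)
Your proof is correct and follows essentially the same route as the paper's: both rest on the observation that uniformity makes $u_{n'}+d_{n'}$ a constant over $Sea(T)$ (via Equation 1), and then reduce the comparison of $|d-u|$ to the comparison of products; the paper writes this as $u_{n_i}d_{n_i}=c^2-h_i^2$ with $h_i$ the deviation from the center $c=\frac{w_n-wi_n}{2}$, which is algebraically the same identity as your $(d-u)^2=(d+u)^2-4ud$. No gap; only a cosmetic caution that your symbol for the common individual weight should not be written as $w$, to avoid collision with the subtree weights $w_n$.
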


\begin{proof}
We prove that $u_{n_1}*d_{n_1} > u_{n_2}*d_{n_2}$ implies that $|d_{n_1} - u_{n_1}| < |d_{n_2} - u_{n_2}|$ and vice versa. This can be shown by developing the equation $u_{n_1}*d_{n_1} > u_{n_2}*d_{n_2}$.\\
Firstly, note that $w_n = \sum{\{wi_{n'} \mid n' \in Sea(T)\}}$, then by Equation 1 we know that $w_n = u_{n_1}+d_{n_1}+wi_{n_1} = u_{n_2}+d_{n_2}+wi_{n_2}$.
Therefore, as $wi_{n_1} = wi_{n_2} = wi_{n}$ the optimal division of $Sea(T)$ happens when $u_{n_1}=d_{n_1}=\frac{w_{n}-wi_{n}}{2}$.
For the sake of simplicity in the notation, let  $c=\frac{w_{n}-wi_{n}}{2}$ and let $h_1=c-d_{n_1}=u_{n_1}-c$ and $h_2=c-d_{n_2}=u_{n_2}-c$. Then,


{\footnotesize
$
\begin{array}{l@{}r}
u_{n_1}*d_{n_1} > u_{n_2}*d_{n_2}\\
$Therefore, we replace $u_{n_1}$, $d_{n_1}$, $u_{n_2}$ and $d_{n_2}$:$\\
(c + h_1) * (c - h_1) > (c + h_2) * (c - h_2)\\
c^2 - h_1 * c + h_1 * c - h_1^2 > c^2 - h_2 * c + h_2 * c - h_2^2\\
$We simplify:$\\
c^2 - h_1^2 > c^2 - h_2^2\\
-h_1^2 > -h_2^2\\
h_1^2 < h_2^2\\ 
$And finally we obtain that:$\\
|h_1| < |h_2|
\end{array}
$
}

\noindent Hence, if the product $u_{n_1}*d_{n_1}$ is greater than $u_{n_2}*d_{n_2}$ then $|h_1|<|h_2|$ and thus, because  $h_1$ and $h_2$ represent distances to the center, $n_1 \gg n_2$.
\end{proof}

%
%

%
%

\noindent {\bf Theorem~\ref{equation}.} 
\emph{
Given a uniform MET $T = (N, E, M)$ whose root is $n \in N$, and given two nodes $n_1, n_2 \in Sea(T)$, with $w_{n_1} > w_{n_2}$,
$n_1 \gg n_2$ if and only if $w_n > w_{n_1} + w_{n_2} - wi_n$.
}

\begin{proof}
By Lemma~\ref{lemma_UD} we know that if $u_{n_1} * d_{n_1} > u_{n_2} * d_{n_2}$ then $n_1 \gg n_2$. Thus it is enough to prove that $w_n > w_{n_1} + w_{n_2} - wi_n$ implies $u_{n_1} * d_{n_1} > u_{n_2} * d_{n_2}$
and vice versa when $w_{n_1} > w_{n_2}$.\\

{\footnotesize
$
\begin{array}{l}
w_n > w_{n_1} + w_{n_2} - wi_n\\
$Adding $wi_n - wi_n$:$\\
w_n > w_{n_1} + w_{n_2} - 2*wi_n + wi_n\\ 
$We replace $w_{n_1}$, $w_{n_2}$ by Equation 2:$\\
w_n > d_{n_1} + d_{n_2} + wi_n\\
$Adding $wi_n * d - wi_n * d$:$\\
w_n > d_{n_1} + d_{n_2} + wi_n * d + wi_n - wi_n * d\\
w_n > d_{n_1} + d_{n_2} + wi_n * d + wi_n (1 - d)\\
$Using $d = \frac{d_{n_1}}{d_{n_1} - d_{n_2}}$ we get:$\\
w_n > d_{n_1} + d_{n_2} + wi_n\frac{d_{n_1}}{d_{n_1} - d_{n_2}} + wi_n (1 - \frac{d_{n_1}}{d_{n_1} - d_{n_2}})\\
w_n > d_{n_1} + d_{n_2} + wi_n\frac{d_{n_1}}{d_{n_1} - d_{n_2}} + wi_n(\frac{d_{n_1} - d_{n_2}}{d_{n_1} - d_{n_2}} - \frac{d_{n_1}}{d_{n_1} - d_{n_2}})\\
w_n > d_{n_1} + d_{n_2} + wi_n\frac{d_{n_1}}{d_{n_1} - d_{n_2}} + wi_n\frac{-d_{n_2}}{d_{n_1} - d_{n_2}}\\
w_n > d_{n_1} + d_{n_2} + wi_n\frac{d_{n_1}}{d_{n_1} - d_{n_2}} - wi_n\frac{d_{n_2}}{d_{n_1} - d_{n_2}}\\
$Because $d_{n_1} + d_{n_2} = \frac{d_{n_1}^2 - d_{n_2}^2}{d_{n_1} - d_{n_2}}$ then:$\\ 
w_n > \frac{d_{n_1}^2 - d_{n_2}^2}{d_{n_1} - d_{n_2}} + \frac{d_{n_1} * wi_n}{d_{n_1} - d_{n_2}} - \frac{d_{n_2} * wi_n}{d_{n_1} - d_{n_2}}\\ 
$Because $w_{n_1} > w_{n_2}$ we know by Equation 2 that $ d_{n_1} - d_{n_2} > 0$, thus:$\\
(d_{n_1} - d_{n_2}) * w_n > d_{n_1}^2 - d_{n_2}^2 + d_{n_1} * wi_n - d_{n_2} * wi_n\\ 
d_{n_1} * w_n - d_{n_2} * w_n > d_{n_1}^2 - d_{n_2}^2 + d_{n_1} * wi_n - d_{n_2} * wi_n\\
d_{n_1} * w_n - d_{n_1}^2 - d_{n_1} * wi_n > d_{n_2} * w_n - d_{n_2}^2 - d_{n_2} * wi_n\\
d_{n_1} * (w_n - d_{n_1} - wi_n) > d_{n_2} * (w_n - d_{n_2} - wi_n)\\
$As $wi_{n} = wi_{n_1} = wi_{n_2}$ we replace $w_n - d_{n_1} - wi_n$, $w_n - d_{n_2} - wi_n$ by Equation 1:$\\
d_{n_1} * u_{n_1} > d_{n_2} * u_{n_2}\\
\end{array}
$\\
}
\end{proof}

\noindent {\bf Theorem~\ref{prop}.}
\emph{
Given a uniform MET $T = (N, E, M)$ whose root is $n \in N$, and given two nodes $n_1, n_2 \in Sea(T)$, with $w_{n_1} > w_{n_2}$, $n_1 \equiv n_2$ if and only if $w_n = w_{n_1} + w_{n_2} - wi_n$.
}

\begin{proof}
The proof is completely analogous to the proof of Theorem \ref{equation}. The only difference is that the equation that is developed should be $w_n = w_{n_1} + w_{n_2} - wi_n$.
\end{proof}

\noindent {\bf Theorem~\ref{equation2}.}
\emph{
Let $T = (N, E, M)$ be a uniform MET, and let $n_1, n_2 \in Sea(T)$ be two nodes, if $w_{n_1} = w_{n_2}$ then $n_1 \equiv n_2$.
}

\begin{proof}
We prove that $w_{n_1} = w_{n_2}$ implies $|d_{n_1} - u_{n_1}| = |d_{n_2} - u_{n_2}|$ and thus $n_1 \equiv n_2$:\\

{\footnotesize
$
\begin{array}{l@{\hspace{-2cm}}r}
w_{n_1} = w_{n_2} & $we replace $w_{n_1}, w_{n_2}$ by Equation 2$\\
d_{n_1} + wi_{n_1} = d_{n_2} + wi_{n_2} & $using $wi_{n_1} = wi_{n_2}\\
d_{n_1} = d_{n_2} & $using $w_{n_1} = w_{n_2}\\
w_{n_1} - w_n + d_{n_1} = w_{n_2} - w_n + d_{n_2} & $replacing $w_{n_1}, w_{n_2}$ by Equation 2$\\
(d_{n_1} + wi_{n_1}) - (u_{n_1} + d_{n_1} + wi_{n_1}) + d_{n_1} & $and $w_{n}$ by Equation 1$\\ 
~~~~~~~~~~~~~~~~~~~~~~= (d_{n_2} + wi_{n_2}) - (u_{n_2} + d_{n_2} + wi_{n_2}) + d_{n_2} & $we simplify$\\
d_{n_1} - u_{n_1} = d_{n_2} - u_{n_2}\\
|d_{n_1} - u_{n_1}| = |d_{n_2} - u_{n_2}|\\
\end{array}
$\\
}
\end{proof}

\noindent {\bf Corollary~\ref{Cor_prod}.}
\emph{
Given a uniform MET $T = (N, E, M)$, and given a node $n \in Sea(T)$, then $n$ optimally divides $Sea(T)$ if and only if $u_{n}=d_{n}$.
}

\begin{proof}
If $n$ optimally divides $Sea(T)$ then the product $u_{n}*d_{n}$ is maximum, and there does not exist other node $n' \in Sea(T)$ such that $u_{n'}*d_{n'} > u_{n}*d_{n}$. This can be easily shown taking into account that the figure of the product is a parabola whose vertex is the maximum value. Therefore, we can compute the maximum by deriving the product.

For simplicity, let $prod=u_{n}*d_{n}$ and $sum=u_{n}+d_{n}$. Then, we start by transforming the equation $u_{n}*d_{n}$ in such a way that it only depends on one of the factors (e.g., $u_{n}$):

{\footnotesize
$
\begin{array}{lr}
u_{n}*d_{n}=prod\\
$We replace $d_{n}:\\
u_{n} * (sum - u_{n}) = prod\\
u_{n} * sum  - u_{n}^2 = prod \\ 
$We derive the equation and equate it to zero:$\\
\frac{d}{du_{n}}(u_{n} * sum  - u_{n}^2) = 0\\
sum - 2 u_{n} = 0\\
$And finally we get the value of $u_{n}$ in the vertex:$\\ 
u_{n} = \frac{sum}{2}\\
\end{array}
$
}

\noindent Now, we can infer $d_{n}$ from $u_{n}$ by simply replacing the value of $u_{n}$ in the equation $u_{n} + d_{n} = sum$:

$
\begin{array}{l@{~~~~~~}r}
\frac{sum}{2} + d_{n} = sum\\
d_{n} = sum - \frac{sum}{2}\\
d_{n} = \frac{sum}{2}\\
d_{n} = u_{n}
\end{array}
$\\
\end{proof}

\newpage
\subsection{Proof of Theorem~\ref{theo_corr}}


Theorem~\ref{theo_corr} states the correctness of Algorithm~\ref{alg_camino} used when all nodes have the same individual weight.
Firstly, we proof the following auxiliary lemma.

\begin{lemma}
\label{comparacion}
Let $T = (N, E, M)$ be a uniform MET whose root is $n \in N$ and $n_1, n_2 \in Sea(T)$ with $w_{n_1} > w_{n_2}$, if $w_n \geq w_{n_1} + w_{n_2}$ then $n_1 \gg n_2$.
\end{lemma}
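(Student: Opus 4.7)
The plan is to obtain this lemma as an immediate corollary of Theorem~\ref{equation}, which already provides the exact criterion $w_n > w_{n_1} + w_{n_2} - wi_n$ characterising when $n_1 \gg n_2$ (given $w_{n_1} > w_{n_2}$). The only gap between the hypothesis here and the hypothesis of that theorem is the term $-wi_n$, and I will close that gap using the fact that we are in a uniform MET.

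Concretely, I would start by recalling the definition of a uniform MET, which enforces $wi_{n'} > 0$ for every $n' \in N$; in particular $wi_n > 0$. Combining this strict positivity with the assumption $w_n \geq w_{n_1} + w_{n_2}$ yields
\[
w_n \;\geq\; w_{n_1} + w_{n_2} \;>\; w_{n_1} + w_{n_2} - wi_n,
\]
so in particular $w_n > w_{n_1} + w_{n_2} - wi_n$. Since we are also told $w_{n_1} > w_{n_2}$, all the hypotheses of Theorem~\ref{equation} are satisfied, and invoking its ``if'' direction delivers $n_1 \gg n_2$.

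I do not anticipate any real obstacle: once Theorem~\ref{equation} has been proved, this lemma is essentially a one-line consequence, and the only thing to double-check is that ``uniform MET'' is indeed the setting used by Theorem~\ref{equation} (it is, via the hypotheses $\forall n',n''\in N,\; wi_{n'}=wi_{n''}$ and $\forall n'\in N,\; wi_{n'}>0$), so no additional case analysis or computation is required.
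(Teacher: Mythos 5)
Your proposal is correct and matches the paper's own proof: both deduce $w_n > w_{n_1} + w_{n_2} - wi_n$ from the hypothesis $w_n \geq w_{n_1} + w_{n_2}$ together with the strict positivity $wi_n > 0$ guaranteed by the uniform-MET assumption, and then apply the ``if'' direction of Theorem~\ref{equation}. No gaps.
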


\begin{proof}
Firstly, by Theorem~\ref{equation} we know that if $w_n > w_{n_1} + w_{n_2} - wi_n$ when $w_{n_1} > w_{n_2}$ then $n_1 \gg n_2$.
Therefore, as $wi_n > 0$, if $w_n \geq w_{n_1} + w_{n_2}$ then $w_n > w_{n_1} + w_{n_2} - wi_n$ and hence $n_1 \gg n_2$.
\end{proof}

In order to prove the correctness of Algorithm~\ref{alg_camino}, we also need to prove the four cases presented in Section~\ref{sec_sameWeight} that are used in the algorithm: \\
$~~~~~~${\bf Case 1:} $n_1$ and $n_2$ are brothers.\\
$~~~~~~${\bf Case 2:} $w_{n_1} > w_{n_2} ~\land~ w_{n_2} > \frac{w_n}{2}$.\\
$~~~~~~${\bf Case 3:} $w_{n_1} > \frac{w_n}{2} ~\land~ w_{n_2} \leq \frac{w_n}{2}$.\\
$~~~~~~${\bf Case 4:} $w_{n_1} > w_{n_2} ~\land~ w_{n_1} \leq \frac{w_n}{2}$.

We prove each case in a separate lemma. 
In case 1, the following lemma shows that given two brother nodes $n_1$ and $n_2$, then the heaviest node is better. 

\begin{lemma}
\label{lem_caminoa}
Given a uniform MET $T = (N, E, M)$ whose root is $n \in N$ and given three nodes $n_1 \in N$ and $n_2, n_3 \in Sea(T)$ with $(n \rightarrow n_1) \in E^*$,$(n_1 \rightarrow n_2),(n_1 \rightarrow n_3) \in E$, $n_2 \gg n_3 \lor n_2 \equiv n_3$ if and only if $w_{n_2} \geq w_{n_3}$.
\end{lemma}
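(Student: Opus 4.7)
The plan is to reduce this to an application of Theorems~\ref{equation} and \ref{equation2} via the auxiliary Lemma~\ref{comparacion}, using the simple combinatorial observation that the weights of two sibling subtrees must fit inside the parent's subtree.

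First, I would observe the structural fact underlying the lemma: since $n_2$ and $n_3$ are distinct children of $n_1$, the node sets of the subtrees rooted at $n_2$ and at $n_3$ are disjoint, and both are contained in the subtree rooted at $n_1$. Consequently $w_{n_2} + w_{n_3} \leq w_{n_1} \leq w_n$ (the last inequality because $n$ is the root of $T$ and $(n\to n_1)\in E^*$). Because $T$ is a uniform MET, $wi_n > 0$, so this yields the key estimate
\[
w_n \;\geq\; w_{n_2} + w_{n_3} \;>\; w_{n_2} + w_{n_3} - wi_n.
\]

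Next, I would prove the forward direction. Assume $w_{n_2} \geq w_{n_3}$. If the inequality is strict, then the displayed estimate together with Theorem~\ref{equation} (equivalently, Lemma~\ref{comparacion}) gives $n_2 \gg n_3$. If $w_{n_2} = w_{n_3}$, then Theorem~\ref{equation2} immediately yields $n_2 \equiv n_3$. Either way, $n_2 \gg n_3 \lor n_2 \equiv n_3$.

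For the converse, I would argue by contradiction. Suppose $n_2 \gg n_3 \lor n_2 \equiv n_3$ but $w_{n_2} < w_{n_3}$. Swapping roles and applying the forward direction just proved gives $n_3 \gg n_2$, which contradicts both disjuncts of the hypothesis (the relations $\gg$ and $\equiv$ are mutually exclusive, and $\gg$ is antisymmetric in the sense that $n_3 \gg n_2$ excludes $n_2 \gg n_3$). Hence $w_{n_2} \geq w_{n_3}$. The only delicate point is the sibling-disjointness inequality; once that is in hand the rest is a direct invocation of the earlier theorems, so I do not anticipate real obstacles beyond stating that structural fact cleanly.
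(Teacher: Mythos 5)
Your proposal is correct and follows essentially the same route as the paper: both rest on the sibling-disjointness bound $w_n \geq w_{n_2} + w_{n_3}$ combined with Lemma~\ref{comparacion}/Theorem~\ref{equation} for the strict case and Theorem~\ref{equation2} for the equal-weight case. Your converse is phrased a little differently (deriving $n_3 \gg n_2$ by symmetry rather than the paper's ``neither inequality can hold'' argument), but this is only a cosmetic variation of the same idea.
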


\begin{proof}
We prove first that $w_{n_2} \geq w_{n_3}$ implies $n_2 \gg n_3 \lor n_2 \equiv n_3$:
Trivially, $w_n \ge w_{n_2} + w_{n_3}$ because $n_2$ and $n_3$ are children of $n_1$ and $n_1$ is descendant of $n$. Therefore, by Lemma~\ref{comparacion} and Theorem~\ref{equation2}, $n_2 \gg n_3 \lor n_2 \equiv n_3$.
Now, we prove that $n_2 \gg n_3 \lor n_2 \equiv n_3$ implies $w_{n_2} \geq w_{n_3}$:
We prove it by contradiction assuming that $w_{n_2} < w_{n_3}$ when $n_2 \gg n_3 \lor n_2 \equiv n_3$, and proving that when $w_{n_2} < w_{n_3}$ and $n_2 \gg n_3 \lor n_2 \equiv n_3$, neither $w_n > w_{n_2} + w_{n_3} - wi_n$ nor $w_n \leq w_{n_2} + w_{n_3} - wi_n$ holds.
By Theorem~\ref{equation} $w_n > w_{n_2} + w_{n_3} - wi_n$ is false because $n_2 \gg n_3 \lor n_2 \equiv n_3$. Moreover, because $n_2$ and $n_3$ are brothers, we know that $w_n \geq w_{n_2} + w_{n_3}$, and hence  $w_n \leq w_{n_2} + w_{n_3} - wi_n$ is also false.
\end{proof}

In case 2, the following lemma ensures that given two nodes $n_1$ and $n_2$ such that $n_1\rightarrow n_2$, if $w_{n_2} > \frac{w_n}{2}$ then $n_2$ is better.

\begin{lemma}
\label{lem_caminoc}
Given a uniform MET $T = (N, E, M)$ whose root is $n \in N$, and given two nodes $n_1, n_2 \in Sea(T)$, with $(n_1\rightarrow n_2) \in E$,  if $w_{n_2} > \frac{w_n}{2}$ then $n_2 \gg n_1$.
\end{lemma}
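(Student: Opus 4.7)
The plan is to reduce the claim to a direct application of Theorem~\ref{equation} after first establishing that $w_{n_1} > w_{n_2}$. Since $n_1 \in Sea(T)$, the individual weight $wi_{n_1}$ contributes to $w_{n_1}$ by definition of $w$, and because $n_2$ is a child of $n_1$ the subtree weight $w_{n_2}$ appears as one of the summands in the recursive expansion of $w_{n_1}$. Hence $w_{n_1} \geq wi_{n_1} + w_{n_2}$. Writing $wi_n$ for the common individual weight (we are in a uniform MET), this gives $w_{n_1} \geq wi_n + w_{n_2} > w_{n_2}$ because $wi_n > 0$.

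Next I would invoke Theorems~\ref{equation} and~\ref{prop} in tandem. With the hypothesis $w_{n_1} > w_{n_2}$ just established, Theorem~\ref{equation} yields $n_1 \gg n_2$ iff $w_n > w_{n_1} + w_{n_2} - wi_n$, and Theorem~\ref{prop} yields $n_1 \equiv n_2$ iff $w_n = w_{n_1} + w_{n_2} - wi_n$. Since the relation ``better divides'' is defined by comparing the scalars $|d_{n_i} - u_{n_i}|$, the three possibilities $\gg$, $\equiv$, $\ll$ are mutually exclusive and exhaustive for any pair of undefined nodes. Consequently the target $n_2 \gg n_1$ is equivalent to the strict inequality $w_n < w_{n_1} + w_{n_2} - wi_n$.

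All that remains is a short arithmetic step. Using $w_{n_1} \geq wi_n + w_{n_2}$ we obtain $w_{n_1} + w_{n_2} - wi_n \geq 2 w_{n_2}$, and the hypothesis $w_{n_2} > w_n/2$ gives $2 w_{n_2} > w_n$, which yields the desired strict inequality $w_{n_1} + w_{n_2} - wi_n > w_n$. The one subtle point worth making explicit in the write-up, and essentially the only place where a reader might pause, is the trichotomy step that lets us flip Theorem~\ref{equation} around to conclude $n_2 \gg n_1$ from the strict reverse inequality; everything else is routine bookkeeping on the definitions of $w$, $u$, and $d$.
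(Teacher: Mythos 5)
Your proof is correct and takes essentially the same route as the paper: both arguments reduce the claim to the characterizations in Theorems~\ref{equation} and~\ref{prop} (together with the trichotomy of $\gg$, $\equiv$, $\ll$) and then perform the same arithmetic using $w_{n_1} \geq wi_n + w_{n_2}$ and $2w_{n_2} > w_n$. The only cosmetic differences are that the paper phrases the argument as a proof by contradiction with increments $inc_{n_1}, inc_{n_2}$ while you argue the reverse strict inequality directly, and that you are slightly more explicit in first verifying the precondition $w_{n_1} > w_{n_2}$ needed to invoke those theorems.
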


\begin{proof}
We prove the lemma by contradiction assuming that $n_1 \gg n_2$ or $n_1 \equiv n_2$.
First, we know that $w_{n_2} = \frac{w_n}{2} + inc_{n_2}$ with $inc_{n_2} > 0$. And we know that $w_{n_1} = \frac{w_n}{2} + inc_{n_2} + wi_n + inc_{n_1}$ with $inc_{n_1} \geq 0$, where $inc_{n_1}$ represent the weight of the possible brothers of $n_2$.
By Theorems~\ref{equation} and \ref{prop} we know that $w_n \geq w_{n_1} + w_{n_2} - wi_{n}$ when $w_{n_1} > w_{n_2}$ implies $n_1 \gg n_2 \lor n_1 \equiv n_2$.

{\footnotesize
$
\begin{array}{l@{~~~~~~}r}
w_n \geq w_{n_1} + w_{n_2} - wi_{n} & $We replace $w_{n_1}, w_{n_2}\\
w_n \geq (\frac{w_n}{2} + inc_{n_2} + wi_n + inc_{n_1}) + (\frac{w_n}{2} + inc_{n_2}) - wi_n & $we simplify$\\
w_n \geq \frac{w_n}{2} + inc_{n_2} + inc_{n_1} + \frac{w_n}{2} + inc_{n_2}\\
w_n \geq \frac{w_n}{2} + \frac{w_n}{2} + 2 * inc_{n_2} + inc_{n_1}\\
w_n \geq w_n + 2 * inc_{n_2} + inc_{n_1}\\
0 \geq 2 * inc_{n_2} + inc_{n_1}\\
\end{array}
$
}

But, this is a contradiction with $inc_{n_2} > 0$. Hence, $n_2 \gg n_1$.
\end{proof}

In case 4, the following lemma ensures that given two nodes whose weight is smaller than $\frac{w_n}{2}$ then the heaviest node is better.

\begin{lemma}
\label{lem_caminob}
Given a uniform MET $T = (N, E, M)$ whose root is $n \in N$, and two nodes $n_1, n_2 \in Sea(T)$, where $\frac{w_n}{2} \geq w_{n_1} > w_{n_2}$ then $n_1 \gg n_2$.
\end{lemma}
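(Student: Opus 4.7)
The plan is to derive this lemma as a direct consequence of Theorem~\ref{equation}, which characterizes $n_1 \gg n_2$ (under the hypothesis $w_{n_1} > w_{n_2}$) by the inequality $w_n > w_{n_1} + w_{n_2} - wi_n$. Since we are already given $w_{n_1} > w_{n_2}$, the only thing left to verify is this inequality on weights. The key observation is that both $w_{n_1}$ and $w_{n_2}$ are bounded above by $\frac{w_n}{2}$, and $wi_n > 0$ in a uniform MET, which will give us the required slack.

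First, I would simply note the chain $w_{n_1} + w_{n_2} < \frac{w_n}{2} + \frac{w_n}{2} = w_n$: the strict inequality comes from $w_{n_2} < w_{n_1} \leq \frac{w_n}{2}$ (so $w_{n_2} < \frac{w_n}{2}$), combined with $w_{n_1} \leq \frac{w_n}{2}$. Second, since $T$ is a uniform MET we have $wi_n > 0$, hence $w_n > w_{n_1} + w_{n_2} > w_{n_1} + w_{n_2} - wi_n$. Finally, applying Theorem~\ref{equation} with this inequality and the hypothesis $w_{n_1} > w_{n_2}$ yields $n_1 \gg n_2$, which closes the proof.

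I do not expect any real obstacle here: this is essentially a one-line application of Theorem~\ref{equation}, and the only minor care needed is to make sure the inequality $w_{n_1} + w_{n_2} < w_n$ is strict (which is guaranteed by $w_{n_2} < w_{n_1}$), so that we do not even need to rely on $wi_n > 0$ to get past the boundary case. Nevertheless, recording the use of $wi_n > 0$ is worthwhile because it matches the exact form of the inequality appearing in Theorem~\ref{equation} and keeps the argument uniform with the companion lemmas for Cases~1, 2 and~3.
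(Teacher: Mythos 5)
Your proof is correct and follows essentially the same route as the paper: both establish $w_n > w_{n_1} + w_{n_2}$ from the bound $w_{n_2} < w_{n_1} \leq \frac{w_n}{2}$ (the paper writes this via decrements $dec_{n_1}, dec_{n_2}$) and then conclude via Theorem~\ref{equation}, the only cosmetic difference being that the paper routes the final step through Lemma~\ref{comparacion}, which is itself just Theorem~\ref{equation} combined with $wi_n > 0$.
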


\begin{proof}
We can assume that $w_{n_1} = \frac{w_n}{2} - dec_{n_1}$ and $w_{n_2} = \frac{w_n}{2} - dec_{n_2}$ where $dec_{n_2} > dec_{n_1} \geq 0$.
Moreover, we know that $w_{n_1} + w_{n_2}=\frac{w_n}{2} - dec_{n_1} + \frac{w_n}{2} - dec_{n_2}$ and thus $w_{n_1} + w_{n_2}=w_n - dec_{n_1} - dec_{n_2}$. Therefore, because $dec_{n_2} > dec_{n_1} \geq 0$, we deduce that $w_n > w_{n_1} + w_{n_2}$. And as $w_{n_1} > w_{n_2}$ then, by Lemma~\ref{comparacion}, $n_1 \gg n_2$.
\end{proof}

If two nodes $n_1$ and $n_2$ are brothers and $n_1$ is better than $n_2$ then $n_1$ is better than any descendant of $n_2$. The following lemma proves this property that is complementary to 
Lemma~\ref{lem_caminoa} for case 1.

\begin{lemma}
\label{lem_caminoa2}
Given a uniform MET $T = (N, E, M)$ whose root is $n \in N$ and four nodes $n_1 \in N$ and $n_2, n_3, n_4 \in Sea(T)$ with $(n \rightarrow n_1) \in E^*$, $(n_1 \rightarrow n_2), (n_1 \rightarrow n_3) \in E,$ $(n_3 \rightarrow n_4)$ $\in E^+$, if $n_2 \gg n_3 \lor n_2 \equiv n_3$ then $n_2 \gg n_4$.
\end{lemma}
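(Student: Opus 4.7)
The plan is to reduce Lemma~\ref{lem_caminoa2} to a single application of Theorem~\ref{equation}, by establishing a strict weight inequality between $n_2$ and $n_4$ and then bounding $w_n$ from below by $w_{n_2} + w_{n_3}$. The intuitive picture is that the sibling subtrees rooted at $n_2$ and $n_3$ both sit under $n_1$, which itself lies on the path from $n$, so the root's weight easily dominates $w_{n_2} + w_{n_3}$; meanwhile $n_4$, being strictly inside the $n_3$ subtree, is strictly lighter than $n_3$ in a uniform MET.

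First I would use Lemma~\ref{lem_caminoa} to turn the hypothesis $n_2 \gg n_3 \lor n_2 \equiv n_3$ into the weight statement $w_{n_2} \geq w_{n_3}$. Next I would argue $w_{n_3} > w_{n_4}$: since $(n_3 \rightarrow n_4) \in E^+$, the subtree weight satisfies $w_{n_3} \geq wi_{n_3} + w_{n_4}$ (because $n_3 \in \mathit{Sea}(T)$ means $M(n_3) = \mathit{Undefined}$, so its individual weight is counted in $w_{n_3}$), and uniformity gives $wi_{n_3} > 0$. Chaining the two inequalities yields $w_{n_2} > w_{n_4}$, which is exactly the strict-ordering precondition needed to invoke Theorem~\ref{equation}.

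Second, I would bound $w_n$ from below. Because $n_2$ and $n_3$ are distinct children of $n_1$, their subtrees are disjoint and both lie inside the subtree of $n_1$, so $w_{n_1} \geq w_{n_2} + w_{n_3}$. Since $(n \rightarrow n_1) \in E^*$ we also have $w_n \geq w_{n_1}$, and therefore $w_n \geq w_{n_2} + w_{n_3} > w_{n_2} + w_{n_4}$. Using $wi_n > 0$ (uniform MET) we strengthen this to $w_n > w_{n_2} + w_{n_4} - wi_n$. Applying Theorem~\ref{equation} with $w_{n_2} > w_{n_4}$ then concludes $n_2 \gg n_4$.

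The only subtle step is the inequality $w_n \geq w_{n_2} + w_{n_3}$: one must verify it in both subcases $n_1 = n$ and $n_1 \neq n$, and observe that even when $n_1 = n$ is marked $\mathit{Wrong}$ (so its own individual weight drops out of $w_n$ per the definition of $w_n$ in Section~\ref{sec_optimal}), the disjoint children's subtrees are still fully included. I expect this bookkeeping about which individual weights are and are not counted, together with confirming that $n_3 \in \mathit{Sea}(T)$ so that $wi_{n_3}$ participates in $w_{n_3}$, to be the only part requiring care; everything else is a direct invocation of Lemma~\ref{lem_caminoa} and Theorem~\ref{equation}.
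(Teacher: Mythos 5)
Your proof is correct, but it follows a genuinely different route from the paper's. The paper first shows the intermediate fact $n_3 \gg n_4$: it splits into the cases $w_{n_2} > \frac{w_n}{2}$ and $\frac{w_n}{2} \geq w_{n_2}$, deduces $\frac{w_n}{2} \geq w_{n_3}$ in both (using $w_n \geq w_{n_2}+w_{n_3}$ in the first case), applies Lemma~\ref{lem_caminob} to get $n_3 \gg n_4$, and then chains this with $n_2 \gg n_3 \lor n_2 \equiv n_3$ by transitivity of the ordering on $|\mathit{Down}(\cdot)-\mathit{Up}(\cdot)|$. You instead compare $n_2$ and $n_4$ directly: Lemma~\ref{lem_caminoa} gives $w_{n_2} \geq w_{n_3}$, strict descent plus $wi_{n_3}=wi_n>0$ gives $w_{n_3} > w_{n_4}$, disjointness of the sibling subtrees gives $w_n \geq w_{n_2}+w_{n_3} > w_{n_2}+w_{n_4} > w_{n_2}+w_{n_4}-wi_n$, and one application of Theorem~\ref{equation} (equivalently Lemma~\ref{comparacion}, since you in fact have $w_n \geq w_{n_2}+w_{n_4}$) yields $n_2 \gg n_4$. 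Your version is shorter and avoids both the case analysis and the appeal to transitivity; the paper's version buys the stronger intermediate statement that the sibling $n_3$ itself dominates all of its descendants, which mirrors the reasoning reused in the correctness proof of Algorithm~\ref{alg_camino} but is not needed for this lemma. Your attention to the bookkeeping (that $n_3 \in \mathit{Sea}(T)$ so $wi_{n_3}$ is counted in $w_{n_3}$, and that $w_n \geq w_{n_2}+w_{n_3}$ holds even when the root is marked $\mathit{Wrong}$) is exactly the care the argument requires.
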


\begin{proof}
First, $n_2$ and $n_3$ are brothers and $n_2 \gg n_3 \lor n_2 \equiv n_3$ then, by Lemma~\ref{lem_caminoa}, we know that $w_{n_2} \geq w_{n_3}$. We distinguish two cases $w_{n_2} > \frac{w_n}{2}$ and $\frac{w_n}{2} \geq w_{n_2}$.\\
If $\frac{w_n}{2} \geq w_{n_2}$ then $\frac{w_n}{2} \geq w_{n_3}$ and by Lemma~\ref{lem_caminob} $n_3 \gg n_4$.\\
If $w_{n_2} > \frac{w_n}{2}$ then we only have to demonstrate that $\frac{w_n}{2} > w_{n_3}$ and then (as before) by Lemma~\ref{lem_caminob} $n_3 \gg n_4$.\\
This can be easily proved having into account that $w_n \ge w_{n_2} + w_{n_3}$ because $n_2$ and $n_3$ are children of $n_1$ and $n_1$ is descendant of $n$, and that $w_{n_2} = \frac{w_n}{2} + inc_{n_2}$ with $inc_{n_2} > 0$.

{\footnotesize
$
\begin{array}{l@{~~~~}r}
w_n \geq w_{n_2} + w_{n_3} & $we replace $w_{n_2}\\
w_n \geq (\frac{w_n}{2} + inc_{n_2}) + w_{n_3}\\
w_n - \frac{w_n}{2} \geq inc_{n_2} + w_{n_3}\\
\frac{w_n}{2} \geq inc_{n_2} + w_{n_3} & $as $inc_{n_2} > 0\\
\frac{w_n}{2} > w_{n_3}\\
\end{array}
$\\
}

Therefore as $n_2 \gg n_3 \lor n_2 \equiv n_3$ and $n_3 \gg n_4$ then $n_2 \gg n_4$.
\end{proof}

The previous lemmas allow Algorithm~\ref{alg_camino} to find a path between the root node and an optimal node. 
The correctness of this algorithm is proved by the following theorem.\\

\noindent {\bf Theorem~\ref{theo_corr}.}
\emph{
Let $T = (N, E, M)$ be a uniform MET, then the execution of Algorithm~\ref{alg_camino} with $T$ as input always terminates producing as output a node $n \in Sea(T)$ such that $\nexists n' \in Sea(T) \mid n' \gg n$.
}

\begin{proof}
The finiteness of the algorithm is proved thanks to the following invariant: $w_{Candidate}$ strictly decreases in each iteration. Therefore, because $N$ is finite, $w_{Candidate}$ will eventually become smaller or equal to $\frac{w_n}{2}$ and the loop will terminate. 

The correctness can be proved showing that after any number of iterations the algorithm always finishes with an optimal node. 
We prove it by induction on the number of iterations performed.

\medskip
\noindent{\bf(Base Case)} In the base case, only one iteration is executed. 
If the condition in Line (5) is satisfied then the root is marked as undefined and it is trivially the optimal node. This optimal node is returned in Line (5).
Otherwise, Lines (4) and (6) select the heaviest child of the root, the loop terminates and Lines (9) or (10) return the optimal node. 

Note that the root node---when it is marked as $\mathit{Wrong}$---can only be selected in the first iteration. 
But even in this case, this node is never selected because the root node must have at least one child marked as $\mathit{Undefined}$. Thus Line (5) is not satisfied and Line (6) selects this node. 
If the condition of the loop is not satisfied, then Line (8) returns the roots' child.

\medskip
\noindent{\bf(Induction Hypothesis)} We assume as the induction hypothesis that after $i$ iterations, the algorithm has a candidate node $Best \in Sea(T)$ such that $\forall n' \in Sea(T), (Best \rightarrow n') \not\in E^*, Best \gg n'$.

\medskip
\noindent{\bf(Inductive Case)} We now prove that the iteration $i+1$ of the algorithm will select a new candidate node $Candidate$ such that $Candidate \gg Best$, or it will terminate selecting an optimal node.

Firstly, when the condition in Line (5) is satisfied $\mathit{Best}$ and $\mathit{Candidate}$ are the same node (say $n'$). According to the induction hypothesis, this node is better than any other of the nodes in the set $\{n'' \in Sea(T) | (n' \rightarrow n'') \not\in E^*\}$. Therefore, because $n'$ has no children, then it is an optimal node; and it is returned in Line (5).
Otherwise, if the condition in Line (5) is not satisfied, Line (7) in the algorithm ensures that $w_{Best}>\frac{w_n}{2}$ being $n$ the root of $T$ because in the iteration $i$ the loop did not terminate or because $Best$ is the root.
Moreover, according to Lines (4) and (6), we know that $Candidate$ is the heaviest child of $Best$. 
We have two possibilities:
{\small
\begin{itemize}
\item $w_{\mathit{Candidate}} >\frac{w_n}{2}$: In this case the loop does not terminate and $\forall n' \in Sea(T),$ $(\mathit{Candidate} \rightarrow n') \not\in E^*, \mathit{Candidate} \gg n'$.
Firstly, by Lemma~\ref{lem_caminoc} we know that $\mathit{Candidate} \gg \mathit{Best}$, and thus, by the induction hypothesis we know that $\forall n' \in Sea(T), (\mathit{Best} \rightarrow n') \not\in E^*, \mathit{Candidate} \gg n'$. By Lemma~\ref{lem_caminoa} $\mathit{Candidate} \gg n'$ $\lor$ $\mathit{Candidate} \equiv n'$ being $n'$ a brother of $\mathit{Candidate}$. 
But as we know that $w_{\mathit{Candidate}} >\frac{w_n}{2}$ then $\mathit{Candidate} \not\equiv n'$.
Moreover, by Lemma~\ref{lem_caminoa2} we can ensure that $\mathit{Candidate} \gg n'$ being $n'$ a descendant of a $\mathit{candidate}$'s brother.
\item $w_{\mathit{Candidate}} \leq\frac{w_n}{2}$: In this case the loop terminates (Line (7)) and by Lemma~\ref{lem_caminoa} we know that $\mathit{Candidate} \gg n' \lor \mathit{Candidate} \equiv n'$ being $n'$ a brother of $\mathit{Candidate}$. Moreover, by Lemma~\ref{lem_caminoa2} we can ensure that $\mathit{Candidate} \gg n'$ being $n'$ a descendant of a $\mathit{candidate}$'s brother. Then equation $(w_n \geq w_{\mathit{Best}} + w_{\mathit{Candidate}} - wi_n)$ is applied in Line (9) to select an optimal node. Theorems~\ref{equation} and \ref{prop} ensures that the node selected is an optimal node because, according to Lemma~\ref{lem_caminob}, for all descendant $n'$ of $\mathit{Candidate}$, $\mathit{Candidate} \gg n'$.
\end{itemize}
}
\end{proof}

\newpage
\subsection{Proof of Theorem~\ref{theo_corr2}}


Theorem~\ref{theo_corr2} states the correctness of Algorithm~\ref{alg_general} used in the general case when nodes can have different individual weights. 
For the proof of this theorem we define first some auxiliary lemmas. 
The following lemma ensures that $w_{n_1} - \frac{wi_{n_1}}{2} > \frac{w_{n}}{2}$ used in the condition of the loop implies $d_{n_1} > u_{n_1}$.
\begin{lemma}
\label{lem_condicionBucle}
Given a variable MET $T = (N, E, M)$ whose root is $n \in N$ and a node $n_1 \in Sea(T)$, $d_{n_1} > u_{n_1}$ if and only if $w_{n_1} - \frac{wi_{n_1}}{2} > \frac{w_{n}}{2}$.
\end{lemma}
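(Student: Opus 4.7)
The plan is to prove the equivalence by direct algebraic substitution, using the two structural identities already established in the paper: Equation~1, $w_n = u_{n_1} + d_{n_1} + wi_{n_1}$, and Equation~2, $w_{n_1} = d_{n_1} + wi_{n_1}$. Since $n_1 \in Sea(T)$ we have $M(n_1) = \mathit{Undefined}$, so Equation~2 applies to $n_1$. Also, $w_n$ is the sum of the individual weights of all undefined nodes, which partitions into the contributions of $n_1$ itself, its proper descendants in $Sea(T)$, and the remaining undefined nodes, giving Equation~1.

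Substituting Equation~2 into the left-hand side of the target inequality yields
\[
w_{n_1} - \tfrac{wi_{n_1}}{2} \;=\; d_{n_1} + \tfrac{wi_{n_1}}{2},
\]
and substituting Equation~1 into the right-hand side yields
\[
\tfrac{w_n}{2} \;=\; \tfrac{u_{n_1} + d_{n_1} + wi_{n_1}}{2}.
\]
The claim $w_{n_1} - \frac{wi_{n_1}}{2} > \frac{w_n}{2}$ is then equivalent, after multiplying through by $2$ and cancelling the common terms $d_{n_1}$ and $wi_{n_1}$ from both sides, to $d_{n_1} > u_{n_1}$. Each rewriting step is reversible, so the equivalence holds in both directions.

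There is essentially no hard step here; the lemma is a bookkeeping identity meant to justify the test in Line~(7) of Algorithm~\ref{alg_general}, where $w_{n_1} - \frac{wi_{n_1}}{2} > \frac{w_n}{2}$ is used as a cheap local criterion to detect that the current candidate's subtree is still heavier than the complementary area. The only minor point to verify is that Equations~1 and~2 are in fact applicable to a variable MET (not just uniform ones); this follows immediately from the general definition of $w_n$ given before Section~\ref{sec_sameWeight}, which excludes exactly the nodes outside $Sea(T)$ and makes no assumption that individual weights coincide. Hence the proof is a short two-line substitution, with no case analysis required.
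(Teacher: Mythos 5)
Your proof is correct and follows essentially the same route as the paper: substitute Equation~2 for $w_{n_1}$ and Equation~1 for $w_n$, multiply by two, cancel, and observe that every step is reversible. The only (welcome) addition is your explicit remark that Equations~1 and~2 hold for variable METs, which the paper uses implicitly.
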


\begin{proof}
We proof that $w_{n_1} - \frac{wi_{n_1}}{2} > \frac{w_{n}}{2}$ implies $d_{n_1} > u_{n_1}$ and vice versa.

{\footnotesize
$
\begin{array}{l@{}r}
w_{n_1} - \frac{wi_{n_1}}{2} > \frac{w_{n}}{2}\\
2w_{n_1} - wi_{n_1} > w_{n}\\
$We replace $w_{n_1}$ using Equation 2:$\\
2(d_{n_1} + wi_{n_1}) - wi_{n_1} > w_{n}\\
2d_{n_1} + wi_{n_1} > w_{n}\\
d_{n_1} > w_{n} - d_{n_1} - wi_{n_1}\\
$We replace $w_{n} - d_{n_1} - wi_{n_1}$ using Equation 1:$\\
d_{n_1} > u_{n_1}\\
\end{array}
$\\
}

\end{proof}

The following lemma ensures that given two nodes $n_1$ and $n_2$ where $d_n \geq u_n$ in both nodes and $n_1 \rightarrow n_2$ then $n_2 \gg n_1 \lor n_2 \equiv n_1$.
\begin{lemma}
\label{lem_camino2c}
Given a variable MET $T = (N, E, M)$ and given two nodes $n_1, n_2 \in Sea(T)$, with $(n_1 \rightarrow n_2) \in E$, if $d_{n_2} \geq u_{n_2}$ then $n_2 \gg n_1 \lor n_2 \equiv n_1$.
\end{lemma}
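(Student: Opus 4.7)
The plan is to exploit the subtree containment: since $(n_1 \to n_2)\in E$ and both nodes lie in $Sea(T)$, moving the ``pivot'' from $n_1$ to $n_2$ only reclassifies $n_1$ itself and the sibling subtrees of $n_2$ from the Down-region into the Up-region. I would make this precise by letting $\Sigma$ denote the total weight in $Sea(T)$ of the subtrees rooted at the children of $n_1$ other than $n_2$. From the recursive definition of $w$ I then have $w_{n_1} = wi_{n_1} + w_{n_2} + \Sigma$.

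The next step is to use Equations 1 and 2 to translate this back into $u$ and $d$. Equation 1 gives $u_{n_i} = w_n - w_{n_i}$ for $i\in\{1,2\}$, so $u_{n_2} - u_{n_1} = w_{n_1} - w_{n_2} = wi_{n_1} + \Sigma$. Equation 2 gives $d_{n_i} = w_{n_i} - wi_{n_i}$, so $d_{n_1} - d_{n_2} = wi_{n_2} + \Sigma$. Adding the two identities yields
\[
(d_{n_1} - u_{n_1}) - (d_{n_2} - u_{n_2}) \;=\; wi_{n_1} + wi_{n_2} + 2\Sigma \;\geq\; 0,
\]
where nonnegativity is immediate since $T$ is a variable MET and $\Sigma$ is a sum of nonnegative individual weights.

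To conclude, I would feed in the hypothesis $d_{n_2} \geq u_{n_2}$: this gives $d_{n_2} - u_{n_2} \geq 0$, and the inequality above then forces $d_{n_1} - u_{n_1} \geq d_{n_2} - u_{n_2} \geq 0$. Both absolute values therefore coincide with the signed differences, and so $|d_{n_1} - u_{n_1}| \geq |d_{n_2} - u_{n_2}|$, which is precisely the definition of $n_2 \gg n_1 \,\vee\, n_2 \equiv n_1$. I do not foresee a genuine obstacle here; the only subtle point is the degenerate case where $n_2$ is the unique child of $n_1$ (so $\Sigma = 0$) and the relevant individual weights vanish, but in that case the chain above collapses to equalities, which is exactly what the $\equiv$ branch of the conclusion permits.
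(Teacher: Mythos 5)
Your proof is correct and follows essentially the same route as the paper's: both express the change from $n_1$ to $n_2$ via the sibling weight (your $\Sigma$, the paper's $inc$) to get $(d_{n_1}-u_{n_1})-(d_{n_2}-u_{n_2})=wi_{n_1}+wi_{n_2}+2\Sigma\geq 0$, and then use $d_{n_2}\geq u_{n_2}$ to drop the absolute values. If anything, your version is slightly tighter: you explicitly derive $d_{n_1}\geq u_{n_1}$ from the hypothesis, whereas the paper simply asserts ``$d_n\geq u_n$ in both nodes.''
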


\begin{proof}
We prove that $|d_{n_2} - u_{n_2}| \leq |d_{n_1} - u_{n_1}|$ holds.
First, we know that $d_{n_1} = d_{n_2} + wi_{n_2} + inc$ and $u_{n_1} = u_{n_2} - wi_{n_1} - inc$ with $inc \geq 0$, where $inc$ represent the weight of the possible brothers of $n_2$.

{\footnotesize
$
\begin{array}{l@{}r}
|d_{n_2} - u_{n_2}| \leq |d_{n_1} - u_{n_1}|\\
$As we know that $d_{n} \geq u_{n}$ in both nodes:$\\
d_{n_2} - u_{n_2} \leq d_{n_1} - u_{n_1}\\
$We replace $d_{n_1}$ and $u_{n_1}$:$\\
d_{n_2} - u_{n_2} \leq (d_{n_2} + wi_{n_2} + inc) - (u_{n_2} - wi_{n_1} - inc)\\
d_{n_2} - u_{n_2} \leq d_{n_2} - u_{n_2} + wi_{n_1} + wi_{n_2} + 2inc\\
0 \leq wi_{n_1} + wi_{n_2} + 2inc\\
\end{array}
$\\
}
Hence, because $wi_{n_1}$, $wi_{n_2}$, $inc \geq 0$ then $|d_{n_2} - u_{n_2}| \leq |d_{n_1} - u_{n_1}|$ is satisfied and thus $n_2 \gg n_1 \lor n_2 \equiv n_1$.
\end{proof}

The following lemma ensures that given two nodes $n_1$ and $n_2$ where $d_n \leq u_n$ in both nodes and $n_1 \rightarrow n_2$ then $n_1 \gg n_2 \lor n_1 \equiv n_2$.
\begin{lemma}
\label{lem_debajo}
Given a variable MET $T = (N, E, M)$ and given two nodes $n_1, n_2 \in Sea(T)$, with $(n_1 \rightarrow n_2) \in E$, if $d_{n_1} \leq u_{n_1}$ then $n_1 \gg n_2 \lor n_1 \equiv n_2$.
\end{lemma}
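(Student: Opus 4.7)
\medskip

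The plan is to mirror the proof of Lemma~\ref{lem_camino2c}: we want to establish the inequality $|d_{n_1} - u_{n_1}| \leq |d_{n_2} - u_{n_2}|$, which by the definitions of $\gg$ and $\equiv$ is exactly $n_1 \gg n_2 \lor n_1 \equiv n_2$. The first step is to relate the four quantities $d_{n_1}, u_{n_1}, d_{n_2}, u_{n_2}$ using the edge $(n_1 \rightarrow n_2) \in E$. Writing $inc \geq 0$ for the total individual weight of $n_2$'s siblings together with their descendants (all intersected with $Sea(T)$), the partition of $Sea(T)$ induced by each node gives $d_{n_1} = d_{n_2} + wi_{n_2} + inc$ and $u_{n_2} = u_{n_1} + wi_{n_1} + inc$.

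Next I would use the hypothesis $d_{n_1} \leq u_{n_1}$ to drop the outer absolute value on the left-hand side: $|d_{n_1} - u_{n_1}| = u_{n_1} - d_{n_1}$. Substituting the expressions for $d_{n_1}$ and $u_{n_2}$ above, a short arithmetic computation yields
\[
u_{n_2} - d_{n_2} \;=\; (u_{n_1} - d_{n_1}) + wi_{n_1} + wi_{n_2} + 2\,inc.
\]
Since $wi_{n_1}, wi_{n_2}, inc \geq 0$ (recall that in a variable MET all individual weights are non-negative) and $u_{n_1} - d_{n_1} \geq 0$ by hypothesis, the right-hand side is non-negative, so $u_{n_2} \geq d_{n_2}$ as well. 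Therefore $|d_{n_2} - u_{n_2}| = u_{n_2} - d_{n_2} \geq u_{n_1} - d_{n_1} = |d_{n_1} - u_{n_1}|$, which is exactly the statement $n_1 \gg n_2 \lor n_1 \equiv n_2$.

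I do not expect any real obstacle: the proof is fully symmetric to Lemma~\ref{lem_camino2c}, with the roles of $n_1$ and $n_2$ exchanged and the direction of the inequality flipped. The only point requiring a little care is verifying that the sign of $u_{n_1} - d_{n_1}$ is preserved when passing to $u_{n_2} - d_{n_2}$ (so that both absolute values can be removed consistently); this is immediate once one observes that all three correcting terms $wi_{n_1}$, $wi_{n_2}$, $2\,inc$ are non-negative, so the imbalance can only grow (or stay equal when $n_1, n_2$ have weight $0$ and $n_2$ has no siblings) as we descend from $n_1$ to its child $n_2$.
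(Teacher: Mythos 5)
Your proof is correct and follows essentially the same route as the paper's: decompose $d_{n_1}$ and $u_{n_2}$ via the sibling/descendant weight $inc$, and conclude $u_{n_2}-d_{n_2} = (u_{n_1}-d_{n_1}) + wi_{n_1}+wi_{n_2}+2\,inc \geq u_{n_1}-d_{n_1}$. If anything, you are slightly more careful than the paper, which asserts $u_{n_2}\geq d_{n_2}$ without comment, whereas you derive it explicitly before removing the absolute values.
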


\begin{proof}
We prove that $|d_{n_1} - u_{n_1}| \leq |d_{n_2} - u_{n_2}|$ holds.
First, we know that $d_{n_2} = d_{n_1} - wi_{n_2} - inc$ and $u_{n_2} = u_{n_1} + wi_{n_1} + inc$ with $inc \geq 0$, where $inc$ represent the weight of the possible brothers of $n_2$.

{\footnotesize
$
\begin{array}{l@{}r}
|d_{n_1} - u_{n_1}| \leq |d_{n_2} - u_{n_2}|\\
$As we know that $u_{n} \geq d_{n}$ in both nodes:$\\
u_{n_1} - d_{n_1} \leq u_{n_2} - d_{n_2}\\
$We replace $d_{n_2}$ and $u_{n_2}$:$\\
u_{n_1} - d_{n_1} \leq (u_{n_1} + wi_{n_1} + inc) - (d_{n_1} - wi_{n_2} - inc)\\
u_{n_1} - d_{n_1} \leq u_{n_1} - d_{n_1} + wi_{n_1} + wi_{n_2} + 2inc\\
0 \leq wi_{n_1} + wi_{n_2} + 2inc\\
\end{array}
$\\
}
Hence, because $wi_{n_1}$, $wi_{n_2}$, $inc \geq 0$ then $|d_{n_1} - u_{n_1}| \leq |d_{n_2} - u_{n_2}|$ is satisfied and thus $n_1 \gg n_2 \lor n_1 \equiv n_2$.
\end{proof}

The following lemma ensures that given two brother nodes $n_1$ and $n_2$, if $d_{n_1} \geq u_{n_1}$ then $d_{n_2} \leq u_{n_2}$.
\begin{lemma}
\label{lem_contradiccionConD}
Given a variable MET $T = (N, E, M)$ whose root is $n \in N$, and given three nodes $n_1 \in N$ and $n_2, n_3 \in Sea(T)$, with $(n \rightarrow n_1) \in E^*$, $(n_1 \rightarrow n_2), (n_1 \rightarrow n_3) \in E$, if $d_{n_2} \geq u_{n_2}$ then $d_{n_3} \leq u_{n_3}$.
\end{lemma}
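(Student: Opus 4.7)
The plan is to exploit the sibling relationship between $n_2$ and $n_3$ to set up two bookkeeping inequalities relating the $u$-weight and $d$-weight of one sibling to those of the other, and then chain them to derive the claim from the hypothesis $d_{n_2}\ge u_{n_2}$.

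First, I would observe the key structural fact: since $n_2$ and $n_3$ are both children of $n_1$, neither is a descendant of the other. Therefore, viewed from $n_2$, the node $n_3$ together with all of its descendants lies \emph{outside} the subtree rooted at $n_2$, so all of their individual weights are counted in $u_{n_2}$. Symmetrically, $n_2$ together with all of its descendants lies outside the subtree rooted at $n_3$. This yields the two inequalities
\[
d_{n_3} + wi_{n_3} \;\leq\; u_{n_2}, \qquad d_{n_2} + wi_{n_2} \;\leq\; u_{n_3},
\]
where the inequalities (rather than equalities) allow for the possibility of additional siblings of $n_2$ and $n_3$, as well as ancestors of $n_1$ in the search area.

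Next, I would chain these with the hypothesis $d_{n_2}\ge u_{n_2}$:
\[
u_{n_3} \;\geq\; d_{n_2} + wi_{n_2} \;\geq\; u_{n_2} + wi_{n_2} \;\geq\; d_{n_3} + wi_{n_3} + wi_{n_2}.
\]
Since all individual weights are nonnegative in a variable MET (the definition requires $wi_{n'}\ge 0$), this gives $u_{n_3}\ge d_{n_3}$, which is exactly $d_{n_3}\le u_{n_3}$, as required.

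I expect no serious obstacle: the proof is essentially a one-line chain of inequalities once the sibling bookkeeping is stated correctly. The only delicate point is making sure that the bookkeeping respects the search area $Sea(T)$ rather than all of $N$, i.e., that nodes marked $\mathit{Wrong}$ are excluded from both sides; but this is automatic because $u_n$ and $d_n$ are defined as sums of $wi_{n'}$ restricted to $n' \in Sea(T)$, and the inclusion of the $n_3$-subtree in $u_{n_2}$ (resp.\ the $n_2$-subtree in $u_{n_3}$) is an inclusion of the same restricted sums.
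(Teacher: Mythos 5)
Your proof is correct and uses essentially the same ingredients as the paper's: the two sibling inequalities $u_{n_2}\geq w_{n_3}=d_{n_3}+wi_{n_3}$ and $u_{n_3}\geq w_{n_2}=d_{n_2}+wi_{n_2}$ together with nonnegativity of the individual weights. The paper merely phrases the same chain as a proof by contradiction ($d_{n_2}\geq u_{n_2}\geq w_{n_3}\geq d_{n_3}>u_{n_3}\geq w_{n_2}\geq d_{n_2}$), whereas you argue directly; the difference is purely stylistic.
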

\begin{proof}
We prove it by contradiction assuming that $d_{n_3} > u_{n_3}$ when $d_{n_2} \geq u_{n_2}$ and they are brothers.
First, we know that as $n_2$ and $n_3$ are brothers then $u_{n_2} \geq w_{n_3}$ and $u_{n_3} \geq w_{n_2}$. Therefore, if $d_{n_3} > u_{n_3}$ then $d_{n_2} \geq u_{n_2} \geq w_{n_3} \geq d_{n_3} > u_{n_3} \geq w_{n_2} \geq d_{n_2}$ that implies $d_{n_2} > d_{n_2}$ that is a contradiction itself.
\end{proof}

If two nodes $n_1$ and $n_2$ are brothers and $d_{n_1} \geq u_{n_1}$ then $n_1 \gg n_2 \lor n_1 \equiv n_2$. The following lemma proves this property.
\begin{lemma}
\label{lem_hermanos}
Given a variable MET $T = (N, E, M)$ whose root is $n \in N$, and given three nodes $n_1 \in N$ and $n_2, n_3 \in Sea(T)$, with $(n \rightarrow n_1) \in E^*$, $(n_1 \rightarrow n_2), (n_1 \rightarrow n_3) \in E$, if $d_{n_2} \geq u_{n_2}$ then $n_2 \gg n_3 \lor n_2 \equiv n_3$.
\end{lemma}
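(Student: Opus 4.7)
The plan is to mirror the algebraic style of Lemmas~\ref{lem_camino2c} and \ref{lem_debajo}, but adapted from a parent--child relation to the sibling relation between $n_2$ and $n_3$. Concretely, I want to establish $|d_{n_2} - u_{n_2}| \leq |d_{n_3} - u_{n_3}|$, which by definition of $\gg$ and $\equiv$ yields the stated conclusion $n_2 \gg n_3 \lor n_2 \equiv n_3$.

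First, I would apply Lemma~\ref{lem_contradiccionConD} to the hypothesis $d_{n_2} \geq u_{n_2}$ to obtain $d_{n_3} \leq u_{n_3}$. This fixes the signs, so both absolute values can be opened directly, reducing the goal to the cleaner $d_{n_2} - u_{n_2} \leq u_{n_3} - d_{n_3}$. Next, using the fact that $n_2$ and $n_3$ are distinct children of $n_1$, let $inc \geq 0$ denote the total individual weight of all nodes in $Sea(T)$ that lie outside the subtrees rooted at $n_2$ and $n_3$ (namely, the other siblings of $n_2$ and $n_3$ together with their $Sea$-descendants, $n_1$ itself in case $n_1 \in Sea(T)$, and every $Sea$-ancestor and side branch above $n_1$). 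Because $n_2, n_3 \in Sea(T)$ and their subtrees are disjoint, the definitions of $Up$ and $Down$ give $u_{n_2} = d_{n_3} + wi_{n_3} + inc$ and $u_{n_3} = d_{n_2} + wi_{n_2} + inc$.

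Substituting these expressions into $d_{n_2} - u_{n_2} \leq u_{n_3} - d_{n_3}$, the terms $d_{n_2}$ and $d_{n_3}$ cancel on both sides and the inequality collapses to $0 \leq wi_{n_2} + wi_{n_3} + 2\cdot inc$, which holds trivially since $T$ is a variable MET (so all individual weights are non-negative) and $inc \geq 0$ by construction. The only point that needs to be argued a bit carefully---and that I view as the main obstacle---is that the \emph{same} value $inc$ appears in the decompositions of both $u_{n_2}$ and $u_{n_3}$; this is immediate from the observation that the complement of $n_2$'s subtree inside $Sea(T)$ differs from the complement of $n_3$'s subtree only by the two disjoint subtrees rooted at $n_2$ and $n_3$ themselves, so the ``external'' mass $inc$ is exactly what remains after removing both of those subtrees. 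The rest is just routine algebra paralleling the preceding two lemmas.
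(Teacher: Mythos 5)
Your proof is correct and follows essentially the same route as the paper's: both invoke Lemma~\ref{lem_contradiccionConD} to fix the signs of $d_{n_2}-u_{n_2}$ and $d_{n_3}-u_{n_3}$, and both reduce the claim to $wi_{n_2}+wi_{n_3}+2\,inc \geq 0$ with the same nonnegative external mass $inc$ (the paper writes $u_{n_i}=w_n-d_{n_i}-wi_{n_i}$ and then decomposes $w_n = d_{n_2}+d_{n_3}+wi_{n_2}+wi_{n_3}+inc$, which is algebraically identical to your direct decomposition $u_{n_2}=d_{n_3}+wi_{n_3}+inc$, $u_{n_3}=d_{n_2}+wi_{n_2}+inc$). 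Your explicit justification that the same $inc$ appears in both expressions is a welcome clarification of a step the paper leaves implicit.
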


\begin{proof}
We prove that $|d_{n_2} - u_{n_2}| \leq |d_{n_3} - u_{n_3}|$ holds.
First, as $n_2$ and $n_3$ are brothers we know that $w_n \geq d_{n_2} + d_{n_3} + wi_{n_2} + wi_{n_3}$, then $w_n = d_{n_2} + d_{n_3} + wi_{n_2} + wi_{n_3} + inc$ with $inc \geq 0$.

{\footnotesize
$
\begin{array}{l@{}r}
|d_{n_2} - u_{n_2}| \leq |d_{n_3} - u_{n_3}|\\
$As $d_{n_2} \geq u_{n_2}$ by Lemma~\ref{lem_contradiccionConD} we know that $u_{n_3} \geq d_{n_3}$:$\\
d_{n_2} - u_{n_2} \leq u_{n_3} - d_{n_3}\\
$We replace $u_{n_2}$ and $u_{n_3}$ using Equation 1:$\\
d_{n_2} - (w_n - d_{n_2} - wi_{n_2}) \leq (w_n - d_{n_3} - wi_{n_3}) - d_{n_3}\\
-w_n + 2d_{n_2} + wi_{n_2} \leq w_n - 2d_{n_3} - wi_{n_3}\\
-2w_n \leq -2d_{n_2} - 2d_{n_3} - wi_{n_2} - wi_{n_3}\\
2w_n \geq 2d_{n_2} + 2d_{n_3} + wi_{n_2} + wi_{n_3}\\
w_n \geq d_{n_2} + d_{n_3} + \frac{wi_{n_2}}{2} + \frac{wi_{n_3}}{2}\\
$We replace $w_n$:$\\
d_{n_2} + d_{n_3} + wi_{n_2} + wi_{n_3} + inc \geq d_{n_2} + d_{n_3} + \frac{wi_{n_2}}{2} + \frac{wi_{n_3}}{2}\\
wi_{n_2} + wi_{n_3} + inc \geq \frac{wi_{n_2}}{2} + \frac{wi_{n_3}}{2}\\
\frac{wi_{n_2}}{2} + \frac{wi_{n_3}}{2} + inc \geq 0\\
\end{array}
$\\
}
Hence, because $wi_{n_2}$, $wi_{n_3}$, $inc \geq 0$ then $|d_{n_2} - u_{n_2}| \leq |d_{n_3} - u_{n_3}|$ is satisfied and thus $n_2 \gg n_3 \lor n_2 \equiv n_3$.
\end{proof}

The following lemma ensures that given two brother nodes $n_1$ and $n_2$, if $w_{n_1} \geq w_{n_2}$ and $d_{n_1} \leq u_{n_1}$ then $d_{n_2} \leq u_{n_2}$.
\begin{lemma}
\label{lem_contradiccionConD2}
Given a variable MET $T = (N, E, M)$ whose root is $n \in N$, and given three nodes $n_1 \in N$ and $n_2, n_3 \in Sea(T)$, with $(n \rightarrow n_1) \in E^*$, $(n_1 \rightarrow n_2), (n_1 \rightarrow n_3) \in E$, if $w_{n_2} \geq w_{n_3}$ and $d_{n_2} \leq u_{n_2}$ then $d_{n_3} \leq u_{n_3}$.
\end{lemma}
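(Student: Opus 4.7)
The plan is to argue by contradiction, in close parallel with the proof of Lemma~\ref{lem_contradiccionConD}. Suppose for contradiction that $d_{n_3} > u_{n_3}$. The key geometric observation is that $n_2$ and $n_3$ are siblings, both children of $n_1$, so every node of the subtree rooted at $n_2$ is disjoint from the subtree rooted at $n_3$. Consequently every such node that lies in $Sea(T)$ is a non-descendant of $n_3$ (and is not $n_3$ itself), so its individual weight is counted in $u_{n_3}$. Since $w_{n_2}$ is by definition exactly the sum of individual weights $wi_{n'}$ over nodes $n'$ in $Sea(T)$ that are descendants of (or equal to) $n_2$, this gives the inequality $u_{n_3} \geq w_{n_2}$.

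Next I would invoke Equation~2 at the node $n_3 \in Sea(T)$, obtaining $w_{n_3} = d_{n_3} + wi_{n_3}$. Because we are in a variable MET, $wi_{n_3} \geq 0$, and hence $w_{n_3} \geq d_{n_3}$.

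Chaining these inequalities together with the contradiction hypothesis and the lemma's assumption $w_{n_2} \geq w_{n_3}$, I would obtain
\[
w_{n_2} \;\geq\; w_{n_3} \;\geq\; d_{n_3} \;>\; u_{n_3} \;\geq\; w_{n_2},
\]
which yields $w_{n_2} > w_{n_2}$, the desired contradiction. Therefore $d_{n_3} \leq u_{n_3}$, as required.

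The main (mild) obstacle is a careful justification of the step $u_{n_3} \geq w_{n_2}$, since both $u_{n_3}$ and the recursive weight $w_{n_2}$ are restricted to $Sea(T)$ and one must check that descendants of $n_2$ outside $Sea(T)$ are harmlessly dropped on both sides. Otherwise the argument is essentially bookkeeping. It is worth noting that the hypothesis $d_{n_2} \leq u_{n_2}$ is never invoked in the chain above; it is the companion condition dual to $d_{n_2} \geq u_{n_2}$ used in Lemma~\ref{lem_contradiccionConD}, serving to pair this lemma with the branching logic of Algorithm~\ref{alg_general} rather than to drive the contradiction itself.
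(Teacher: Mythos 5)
Your proof is correct and follows essentially the same route as the paper's: both argue by contradiction via the cyclic chain $d_{n_3} > u_{n_3} \geq w_{n_2} \geq w_{n_3} \geq d_{n_3}$, using the sibling relation to get $u_{n_3} \geq w_{n_2}$ and Equation~2 with $wi_{n_3}\geq 0$ to get $w_{n_3}\geq d_{n_3}$. Your observation that the hypothesis $d_{n_2}\leq u_{n_2}$ is never actually used also matches the paper's proof, which lists it but does not invoke it.
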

\begin{proof}
We prove it by contradiction assuming that $d_{n_3} > u_{n_3}$ when $w_{n_2} \geq w_{n_3}$ and $d_{n_2} \leq u_{n_2}$ and they are brothers.
First, we know that as $n_2$ and $n_3$ are brothers then $u_{n_2} \geq w_{n_3}$ and $u_{n_3} \geq w_{n_2}$. Therefore, if $d_{n_3} > u_{n_3}$ then $d_{n_3} > u_{n_3} \geq w_{n_2} \geq w_{n_3} \geq d_{n_3}$ that implies $d_{n_3} > d_{n_3}$ that is a contradiction itself.
\end{proof}

If two nodes $n_1$ and $n_2$ are brothers and $u_{n_1} \geq d_{n_1} \land u_{n_2} \geq d_{n_2}$ then, if $w_{n_1} - \frac{wi_{n_1}}{2} \geq w_{n_2} - \frac{wi_{n_2}}{2}$ is satisfied then $n_1 \gg n_2 \lor n_1 \equiv n_2$. The following lemma proves this property.
\begin{lemma}
\label{lem_hermanosDebajo}
Given a variable MET $T = (N, E, M)$ whose root is $n \in N$, and given three nodes $n_1 \in N$ and $n_2, n_3 \in Sea(T)$, with $(n \rightarrow n_1) \in E^*$, $(n_1 \rightarrow n_2), (n_1 \rightarrow n_3) \in E$, and $u_{n_2} \geq d_{n_2}$ and $u_{n_3} \geq d_{n_3}$, 
$n_2 \gg n_3 \lor n_2 \equiv n_3$ if and only if $w_{n_2} - \frac{wi_{n_2}}{2} \geq w_{n_3} - \frac{wi_{n_3}}{2}$.
\end{lemma}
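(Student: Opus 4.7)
The proof should be a straightforward chain of equivalences driven by Equations 1 and 2, with the hypotheses $u_{n_2}\geq d_{n_2}$ and $u_{n_3}\geq d_{n_3}$ being exploited to strip away absolute values. First I would unfold the definition of $\gg$ and $\equiv$: the disjunction $n_2 \gg n_3 \lor n_2 \equiv n_3$ is equivalent to $|d_{n_2}-u_{n_2}| \leq |d_{n_3}-u_{n_3}|$. Since by hypothesis $u_{n_2}-d_{n_2}\geq 0$ and $u_{n_3}-d_{n_3}\geq 0$, this inequality rewrites without absolute values as $u_{n_2}-d_{n_2} \leq u_{n_3}-d_{n_3}$.

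Next, I would substitute using Equation~1 (applied at the root $n$ of the MET): $u_{n_i}=w_n - d_{n_i} - wi_{n_i}$ for $i\in\{2,3\}$. After cancellation of $w_n$ on both sides this yields $2d_{n_2}+wi_{n_2} \geq 2d_{n_3}+wi_{n_3}$, i.e.\ $d_{n_2}+\tfrac{wi_{n_2}}{2} \geq d_{n_3}+\tfrac{wi_{n_3}}{2}$. Then I would apply Equation~2, which gives $d_{n_i}=w_{n_i}-wi_{n_i}$, and the last inequality becomes exactly $w_{n_2}-\tfrac{wi_{n_2}}{2} \geq w_{n_3}-\tfrac{wi_{n_3}}{2}$.

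Since every rewriting above is an equivalence (not merely an implication), both directions of the ``if and only if'' are obtained simultaneously, so there is no separate reverse argument to carry out. Observe also that the fact that $n_2$ and $n_3$ are brothers is not used in the algebraic chain itself; it only enters through the two one-sided hypotheses $u_{n_i}\geq d_{n_i}$ (which are stated as part of the premise) and was needed earlier in Lemma~\ref{lem_contradiccionConD2} to ensure that these two hypotheses are jointly consistent when $w_{n_2}\geq w_{n_3}$.

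The only subtle point — and essentially the only place where a sign error could enter — is tracking the direction of the inequality when moving between $|d-u|$ (where smaller is better for division quality, so $\gg$ corresponds to $\leq$) and the final inequality on $w-\tfrac{wi}{2}$ (where larger is better, so $\gg$ corresponds to $\geq$). The flip occurs at the substitution step where $u_{n_i}-d_{n_i} = w_n - 2d_{n_i}-wi_{n_i}$ makes the $d_{n_i}$ (and hence $w_{n_i}$) terms appear with a minus sign, so minimizing $u_{n_i}-d_{n_i}$ becomes maximizing $d_{n_i}+\tfrac{wi_{n_i}}{2}$, equivalently maximizing $w_{n_i}-\tfrac{wi_{n_i}}{2}$. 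Once this orientation is handled, the proof is purely routine algebra.
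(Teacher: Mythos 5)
Your proof is correct and follows essentially the same route as the paper's: both reduce $n_2 \gg n_3 \lor n_2 \equiv n_3$ to $u_{n_2}-d_{n_2}\leq u_{n_3}-d_{n_3}$ via the sign hypotheses and then run a reversible algebraic chain through Equations~1 and~2 to reach $w_{n_2}-\frac{wi_{n_2}}{2}\geq w_{n_3}-\frac{wi_{n_3}}{2}$ (the paper merely presents the chain in the opposite direction). Your side remark that the brotherhood of $n_2$ and $n_3$ is not needed in the algebra itself is also accurate.
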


\begin{proof}
First, if $|d_{n_2} - u_{n_2}| \leq |d_{n_3} - u_{n_3}|$ then $n_2 \gg n_3 \lor n_2 \equiv n_3$.
Thus it is enough to prove that $w_{n_2} - \frac{wi_{n_2}}{2} \geq w_{n_3} - \frac{wi_{n_3}}{2}$ implies $|d_{n_2} - u_{n_2}| \leq |d_{n_3} - u_{n_3}|$ and vice versa when $u_{n} \geq d_{n}$ in both nodes and they are brothers.

{\footnotesize
$
\begin{array}{l@{}r}
w_{n_2} - \frac{wi_{n_2}}{2} \geq w_{n_3} - \frac{wi_{n_3}}{2}\\
2w_{n_2} - wi_{n_2} \geq 2w_{n_3} - wi_{n_3}\\
$We replace $w_{n_2}$ and $w_{n_3}$ using Equation 2:$\\
2(d_{n_2} + wi_{n_2}) - wi_{n_2} \geq 2(d_{n_3} + wi_{n_3}) - wi_{n_3}\\
2d_{n_2} + wi_{n_2} \geq 2d_{n_3} + wi_{n_3}\\
$We add $-w_{n}$:$\\
-w_{n} + 2d_{n_2} + wi_{n_2} \geq -w_{n} + 2d_{n_3} + wi_{n_3}\\
w_{n} - 2d_{n_2} - wi_{n_2} \leq w_{n} - 2d_{n_3} - wi_{n_3}\\
$We replace $w_{n}$ using Equation 1:$\\
(d_{n_2} + u_{n_2} + wi_{n_2}) - 2d_{n_2} - wi_{n_2} \leq (d_{n_3} + u_{n_3} + wi_{n_3}) - 2d_{n_3} - wi_{n_3}\\
-d_{n_2} + u_{n_2} \leq -d_{n_3} + u_{n_3}\\
u_{n_2} - d_{n_2} \leq u_{n_3} - d_{n_3}\\
$As $u_{n} \geq d_{n}$ in both nodes:$\\
|u_{n_2} - d_{n_2}| \leq |u_{n_3} - d_{n_3}|\\
|d_{n_2} - u_{n_2}| \leq |d_{n_3} - u_{n_3}|\\
\end{array}
$\\
}
\end{proof}

If two nodes $n_1$ and $n_2$ are brothers and $d_{n_1} \geq u_{n_1}$ and $n_2 \rightarrow^+ n_3$ then, if $n_1 \equiv n_2$ then $n_1 \gg n_3 \lor n_1 \equiv n_3$. The following lemma proves this property.
\begin{lemma}
\label{lem_descendienteHermano}
Given a variable MET $T = (N, E, M)$ whose root is $n \in N$, and given four nodes $n_1 \in N$ and $n_2, n_3, n_4 \in Sea(T)$, with $(n \rightarrow n_1) \in E^*$, $(n_1 \rightarrow n_2), (n_1 \rightarrow n_3) \in E$, $(n_3 \rightarrow n_4) \in E^+$, if $d_{n_2} \geq u_{n_2}$ and $n_2 \equiv n_3$ then $n_2 \gg n_4 \lor n_2 \equiv n_4$.
\end{lemma}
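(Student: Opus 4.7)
The plan is to propagate the hypothesis $d_{n_2} \ge u_{n_2}$ from $n_2$ to its sibling $n_3$, and then all the way down the tree to $n_4$, so that Lemma~\ref{lem_debajo} can be applied at every edge of that descent. First, I would invoke Lemma~\ref{lem_contradiccionConD} on the siblings $n_2$ and $n_3$ (both children of $n_1$, with $n_1$ a descendant of the root) to deduce $d_{n_3} \le u_{n_3}$. This places $n_3$ on the ``upper'' side, which is exactly the hypothesis required by Lemma~\ref{lem_debajo}.

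Next, I would establish a simple monotonicity observation along the unique path $n_3 = m_0 \rightarrow m_1 \rightarrow \cdots \rightarrow m_k = n_4$. Unfolding the definitions of $\mathit{Up}$ and $\mathit{Down}$, each step from $m_i$ to $m_{i+1}$ only removes descendants from $\mathit{Down}$ and adds $wi_{m_i}$ together with the weight of the sibling subtrees to $\mathit{Up}$. Hence $d_{m_{i+1}} \le d_{m_i}$ and $u_{m_{i+1}} \ge u_{m_i}$. Consequently, the inequality $d_{m_0} \le u_{m_0}$ obtained in the first step propagates to $d_{m_i} \le u_{m_i}$ for every $i \in \{0, \ldots, k-1\}$.

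Then I would iterate Lemma~\ref{lem_debajo} along the path: at each index $i < k$ its hypothesis holds at $m_i$, so $m_i \gg m_{i+1} \lor m_i \equiv m_{i+1}$, i.e. $|d_{m_i} - u_{m_i}| \le |d_{m_{i+1}} - u_{m_{i+1}}|$. Chaining the $k$ weak inequalities transitively yields $|d_{n_3} - u_{n_3}| \le |d_{n_4} - u_{n_4}|$, which is $n_3 \gg n_4 \lor n_3 \equiv n_4$. Combining with the hypothesis $n_2 \equiv n_3$ (i.e. $|d_{n_2} - u_{n_2}| = |d_{n_3} - u_{n_3}|$) gives $|d_{n_2} - u_{n_2}| \le |d_{n_4} - u_{n_4}|$, which is exactly the desired conclusion $n_2 \gg n_4 \lor n_2 \equiv n_4$.

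The main obstacle I anticipate is the monotonicity step: it is intuitively obvious but has not been recorded as an earlier lemma, so it must be justified directly from the definitions of $\mathit{Up}$ and $\mathit{Down}$. Without it one cannot guarantee that the ``upper-side'' hypothesis of Lemma~\ref{lem_debajo} holds at the interior nodes $m_1, \ldots, m_{k-1}$, and the transitive chain in the third step would break. Everything else is then a routine assembly of the existing lemmas.
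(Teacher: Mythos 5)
Your proof is correct and takes essentially the same route as the paper: Lemma~\ref{lem_contradiccionConD} to obtain $d_{n_3} \leq u_{n_3}$ from $d_{n_2} \geq u_{n_2}$, Lemma~\ref{lem_debajo} to descend from $n_3$ to $n_4$, and transitivity through $n_2 \equiv n_3$. You are in fact more careful than the paper's one-line argument, which applies Lemma~\ref{lem_debajo} directly to the pair $(n_3, n_4) \in E^+$ even though that lemma is stated only for a single edge $(n_1 \rightarrow n_2) \in E$; your explicit iteration along the path, justified by the monotonicity of $d$ and $u$ going down the tree, supplies exactly the detail needed to make that step rigorous.
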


\begin{proof}
This can be trivially proof having into account that $d_{n_3} \leq u_{n_3}$ when $d_{n_2} \geq u_{n_2}$ by Lemma~\ref{lem_contradiccionConD} and then by Lemma~\ref{lem_debajo} we know that $n_3 \gg n_4 \lor n_3 \equiv n_4$ and as $n_2 \equiv n_3$ then $n_2 \gg n_4 \lor n_2 \equiv n_4$.
\end{proof}

If two nodes $n_1$ and $n_2$ are brothers and $d_{n_1} \leq u_{n_1} \land d_{n_2} \leq u_{n_2}$ and $n_2 \rightarrow^+ n_3$ then, if $n_1 \equiv n_2$ then $n_1 \gg n_3 \lor n_1 \equiv n_3$. The following lemma proves this property.
\begin{lemma}
\label{lem_descendienteHermano2}
Given a variable MET $T = (N, E, M)$ whose root is $n \in N$, and given four nodes $n_1 \in N$ and $n_2, n_3, n_4 \in Sea(T)$, with $(n \rightarrow n_1) \in E^*$, $(n_1 \rightarrow n_2), (n_1 \rightarrow n_3) \in E$, $(n_3 \rightarrow n_4) \in E^+$, if $d_{n_2} \leq u_{n_2}$ and $d_{n_3} \leq u_{n_3}$ and $n_2 \equiv n_3$ then $n_2 \gg n_4 \lor n_2 \equiv n_4$.
\end{lemma}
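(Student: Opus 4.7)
The plan is to mirror the structure of the proof of Lemma~\ref{lem_descendienteHermano}. That proof first derives $d_{n_3} \leq u_{n_3}$ from the hypothesis $d_{n_2} \geq u_{n_2}$ via Lemma~\ref{lem_contradiccionConD}, then invokes Lemma~\ref{lem_debajo} to obtain $n_3 \gg n_4 \lor n_3 \equiv n_4$, and finally combines this with $n_2 \equiv n_3$ to conclude. In the present setting the hypothesis already supplies $d_{n_3} \leq u_{n_3}$ directly, so that preliminary step is discharged for free and the rest of the argument carries over almost verbatim.

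First I would apply Lemma~\ref{lem_debajo} to obtain $n_3 \gg n_4 \lor n_3 \equiv n_4$. Because $(n_3 \rightarrow n_4) \in E^+$ rather than simply $E$, this application must be iterated along the path from $n_3$ to $n_4$. At each intermediate edge $(n' \rightarrow n'') \in E$ along that path, Lemma~\ref{lem_debajo} simultaneously yields $n' \gg n'' \lor n' \equiv n''$ and, implicitly from its proof, the preservation of the inequality $u \geq d$ downward: the calculation in that proof shows $|d_{n'} - u_{n'}| \leq |d_{n''} - u_{n''}|$, which, when combined with $u_{n'} \geq d_{n'}$, forces $u_{n''} \geq d_{n''}$. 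Hence the hypothesis of Lemma~\ref{lem_debajo} is restored at every step, and by transitivity of $\gg$ and $\equiv$ the local comparisons compose to give $n_3 \gg n_4 \lor n_3 \equiv n_4$.

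To finish, I would combine this with the assumption $n_2 \equiv n_3$: if $n_3 \gg n_4$ then $n_2 \gg n_4$, and if $n_3 \equiv n_4$ then $n_2 \equiv n_4$, so in either case $n_2 \gg n_4 \lor n_2 \equiv n_4$, as required. The main obstacle is the extension of Lemma~\ref{lem_debajo} from a single edge in $E$ to the transitive closure $E^+$, which rests on the downward preservation of $u \geq d$ sketched above; once that small bookkeeping is carried out, the result follows immediately from transitivity, exactly as in Lemma~\ref{lem_descendienteHermano}.
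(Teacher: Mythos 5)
Your proof follows essentially the same route as the paper: apply Lemma~\ref{lem_debajo} to obtain $n_3 \gg n_4 \lor n_3 \equiv n_4$ and then compose with $n_2 \equiv n_3$. In fact you are more careful than the paper's own one-line proof, which silently applies Lemma~\ref{lem_debajo} to an $E^+$-descendant even though that lemma is stated only for a single edge; your observation that it must be iterated along the path from $n_3$ to $n_4$, with the hypothesis $d \leq u$ re-established at each step, is exactly the bookkeeping the paper omits.

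One inference in that bookkeeping is not valid as stated, however. You claim that $|d_{n'} - u_{n'}| \leq |d_{n''} - u_{n''}|$ together with $u_{n'} \geq d_{n'}$ forces $u_{n''} \geq d_{n''}$. It does not: for instance $u_{n'} - d_{n'} = 1$ and $d_{n''} - u_{n''} = 3$ satisfies the absolute-value inequality while violating the conclusion. The invariant you need is true, but for a different (and simpler) reason, already visible in the proof of Lemma~\ref{lem_debajo}: along an edge $(n' \rightarrow n'') \in E$ one has $d_{n''} = d_{n'} - wi_{n''} - inc$ and $u_{n''} = u_{n'} + wi_{n'} + inc$ with $inc \geq 0$, so $d_{n''} \leq d_{n'} \leq u_{n'} \leq u_{n''}$. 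With that substitution your iteration goes through, and the final composition with $n_2 \equiv n_3$ is fine since $\gg$ and $\equiv$ are just the strict order and equality on the quantity $|d_n - u_n|$.
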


\begin{proof}
This can be trivially proof having into account that $d_{n_3} \leq u_{n_3}$ and then by Lemma~\ref{lem_debajo} we know that $n_3 \gg n_4 \lor n_3 \equiv n_4$ and as $n_2 \equiv n_3$ then $n_2 \gg n_4 \lor n_2 \equiv n_4$.
\end{proof}

If two nodes $n_1$ and $n_2$ are brothers and $n_1 \gg n_2$ and $n_2 \rightarrow^+ n_3$ then $n_1 \gg n_3$. The following lemma proves this property.
\begin{lemma}
\label{lem_camino2a2}
Given a variable MET $T = (N, E, M)$ whose root is $n \in N$, and given four nodes $n_1 \in N$ and $n_2, n_3, n_4 \in Sea(T)$, with $(n \rightarrow n_1) \in E^*$, $(n_1 \rightarrow n_2), (n_1 \rightarrow n_3) \in E$, $(n_3 \rightarrow n_4) \in E^+$, if $n_2 \gg n_3$ then $n_2 \gg n_4$.
\end{lemma}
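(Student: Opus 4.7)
The plan is to reduce this to a chain: first show that the hypothesis $n_2 \gg n_3$ forces $n_3$ to sit on the ``down-heavy'' side of zero (i.e.\ $d_{n_3} < u_{n_3}$), and then use Lemma~\ref{lem_debajo} repeatedly along the path from $n_3$ down to $n_4$ to conclude $n_3 \gg n_4 \lor n_3 \equiv n_4$. Combining this with the strict inequality $n_2 \gg n_3$ will yield $n_2 \gg n_4$.

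For the first step, I would apply Lemma~\ref{lem_hermanos} with the roles of $n_2$ and $n_3$ swapped: that lemma states that whenever the ``down'' side of one sibling is at least its ``up'' side, that sibling is at least as good as the other. Taking the contrapositive, if $n_2 \gg n_3$ then we cannot have $d_{n_3} \geq u_{n_3}$, so we obtain $d_{n_3} < u_{n_3}$. This is the crucial structural consequence of the hypothesis.

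For the second step, I would argue by induction on the length of the path $(n_3 \to \cdots \to n_4) \in E^+$. The base case (one edge) is a direct application of Lemma~\ref{lem_debajo} using $d_{n_3} \leq u_{n_3}$. For the inductive step, I would use a simple monotonicity remark: along any descending edge $(m \to m')$, we have $d_{m'} \leq d_m$ and $u_{m'} \geq u_m$ (the descendants of $m'$ are descendants of $m$, and the complement grows by the weight of $m$ itself plus its other subtrees). Hence the condition $d \leq u$ is preserved all the way down from $n_3$, so Lemma~\ref{lem_debajo} applies at every intermediate node, and chaining the resulting $\gg$-or-$\equiv$ comparisons gives $n_3 \gg n_4 \lor n_3 \equiv n_4$.

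Finally, because $\gg$ and $\equiv$ are defined through strict and non-strict comparisons of the real number $|d_n - u_n|$, the combination of $n_2 \gg n_3$ (strict) with $n_3 \gg n_4 \lor n_3 \equiv n_4$ (non-strict) yields $n_2 \gg n_4$ by transitivity of the underlying order on $\mathbb{R}$. The only subtle point, which I would state explicitly, is the monotonicity of $d$ and $u$ along a descending path; the rest is a straightforward chain of previously proved lemmas, so I do not anticipate a real obstacle beyond carefully recording the induction.
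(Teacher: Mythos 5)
Your proof is correct, and it reaches the same two intermediate facts as the paper's proof --- first that $n_2 \gg n_3$ forces $d_{n_3} < u_{n_3}$, and second that descending from $n_3$ can only worsen (or preserve) the quantity $|d - u|$ --- but it gets there by a more modular route. For the first fact, the paper redoes the algebra from scratch: it assumes $d_{n_3} \geq u_{n_3}$, invokes Lemma~\ref{lem_contradiccionConD} to get $u_{n_2} \geq d_{n_2}$, and expands $|d_{n_2} - u_{n_2}| < |d_{n_3} - u_{n_3}|$ into the contradiction $\frac{wi_{n_2}}{2} + \frac{wi_{n_3}}{2} + inc < 0$; you instead take the contrapositive of Lemma~\ref{lem_hermanos} with the roles of the siblings swapped, which is legitimate (the hypotheses of that lemma are symmetric in the two siblings, and $\gg$, $\equiv$, $\ll$ are trichotomous on $\mathbb{R}$) and saves the computation. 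For the second fact, the paper argues by contradiction in one shot, writing $d_{n_4} = d_{n_3} - wi_{n_4} - dec$ and $u_{n_4} = u_{n_3} + wi_{n_3} + dec$ with an aggregated $dec \geq 0$ and deducing that $|d_{n_3}-u_{n_3}| > |d_{n_4}-u_{n_4}|$ would force $d_{n_3} > u_{n_3}$; you instead iterate Lemma~\ref{lem_debajo} edge by edge, which requires the monotonicity remark ($d$ non-increasing and $u$ non-decreasing along a descending edge) to propagate the side condition $d \leq u$ down the path --- that remark is exactly the displacement the paper encodes in its $dec$ term, and you are right to flag it as the one point that must be stated explicitly. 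The final step, combining the strict comparison $n_2 \gg n_3$ with the non-strict chain $n_3 \gg n_4 \lor n_3 \equiv n_4$, is sound because both relations are defined by comparing the real numbers $|d - u|$. What your version buys is reuse of already-proved lemmas and no fresh inequality manipulation; what the paper's version buys is self-containment at the cost of repeating algebra that is essentially Lemma~\ref{lem_debajo} in disguise.
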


\begin{proof}
We show that if $n_2 \gg n_3$ then $d_{n_3} < u_{n_3}$. We prove it by contradiction assuming that $d_{n_3} \geq u_{n_3}$ when $n_2 \gg n_3$.
First, as $n_2$ and $n_3$ are brothers we know that $w_n \geq d_{n_2} + d_{n_3} + wi_{n_2} + wi_{n_3}$, then $w_n = d_{n_2} + d_{n_3} + wi_{n_2} + wi_{n_3} + inc$ with $inc \geq 0$. Therefore, if $|d_{n_2} - u_{n_2}| < |d_{n_3} - u_{n_3}|$ then $n_2 \gg n_3$.
Thus it is enough to prove that $|d_{n_2} - u_{n_2}| < |d_{n_3} - u_{n_3}|$ is not satisfied when $d_{n_3} \geq u_{n_3}$ and $n_2$ and $n_3$ are brothers.

{\footnotesize
$
\begin{array}{l@{}r}
|d_{n_2} - u_{n_2}| < |d_{n_3} - u_{n_3}|\\
$As $d_{n_3} \geq u_{n_3}$ by Lemma~\ref{lem_contradiccionConD} we know that $u_{n_2} \geq d_{n_2}$:$\\
u_{n_2} - d_{n_2} < d_{n_3} - u_{n_3}\\
$We replace $u_{n_2}$ and $u_{n_3}$ using Equation 1:$\\
(w_{n} - d_{n_2} - wi_{n_2}) - d_{n_2} < d_{n_3} - (w_{n} - d_{n_3} - wi_{n_3})\\
w_{n} - 2d_{n_2} - wi_{n_2} < 2d_{n_3} - w_{n} + wi_{n_3}\\
2w_{n} < 2d_{n_2} + 2d_{n_3} + wi_{n_2} + wi_{n_3}\\
w_{n} < d_{n_2} + d_{n_3} + \frac{wi_{n_2}}{2} + \frac{wi_{n_3}}{2}\\
$We replace $w_{n}$:$\\
d_{n_2} + d_{n_3} + wi_{n_2} + wi_{n_3} + inc < d_{n_2} + d_{n_3} + \frac{wi_{n_2}}{2} + \frac{wi_{n_3}}{2}\\
wi_{n_2} + wi_{n_3} + inc < \frac{wi_{n_2}}{2} + \frac{wi_{n_3}}{2}\\
\frac{wi_{n_2}}{2} + \frac{wi_{n_3}}{2} + inc < 0\\
\end{array}
$\\
}
But, this is a contradiction with $wi_{n_2}, wi_{n_3}, inc \geq 0$. Hence, $d_{n_3} < u_{n_3}$.

Now we show that, if $n_2 \gg n_3$ then $n_2 \gg n_4$. We prove it by contradiction assuming that $n_4 \gg n_2 \lor n_4 \equiv n_2$ when $n_2 \gg n_3$.
First, we know that $d_{n_3} < u_{n_3}$. Therefore we know that $d_{n_4} = d_{n_3} - wi_{n_4} - dec$ and $u_{n_4} = u_{n_3} + wi_{n_3} + dec$ with $dec \geq 0$, where $dec$ represent the weight of the possible brothers of $n_4$.

{\footnotesize
$
\begin{array}{l@{}r}
|d_{n_3} - u_{n_3}| > |d_{n_2} - u_{n_2}| \geq |d_{n_4} - u_{n_4}|\\
$We replace $d_{n_4}$ and $u_{n_4}$:$\\
|d_{n_3} - u_{n_3}| > |d_{n_2} - u_{n_2}| \geq |(d_{n_3} - wi_{n_4} - dec) - (u_{n_3} + wi_{n_3} + dec)|\\
|d_{n_3} - u_{n_3}| > |d_{n_2} - u_{n_2}| \geq |d_{n_3} - wi_{n_4} - dec - u_{n_3} - wi_{n_3} - dec|\\
|d_{n_3} - u_{n_3}| > |d_{n_2} - u_{n_2}| \geq |d_{n_3} - u_{n_3} - wi_{n_3} - wi_{n_4} - 2dec|\\
\end{array}
$\\
}
Note that $d_{n_3} - u_{n_3}$ must be positive, thus $d_{n_3} > u_{n_3}$. But this is a contradiction with $d_{n_3} < u_{n_3}$.
\end{proof}

The following lemma ensures that given two nodes $n_1$ and $n_2$ where $d_{n_1} \geq u_{n_1}$ and $d_{n_2} \leq u_{n_2}$ and $n_1 \rightarrow n_2$ then if $w_{n} \geq w_{n_1} + w_{n_2} - \frac{wi_{n_1}}{2} - \frac{wi_{n_2}}{2}$ is satisfied then $n_1 \gg n_2 \lor n_1 \equiv n_2$.
\begin{lemma}
\label{lem_camino2e}
Given a variable MET $T = (N, E, M)$ and given two nodes $n_1, n_2 \in Sea(T)$, with $(n_1 \rightarrow n_2) \in E$, and $d_{n_1} \geq u_{n_1}$, and $d_{n_2} \leq u_{n_2}$, 
$n_1 \gg n_2 \lor n_1 \equiv n_2$ if and only if $w_{n} \geq w_{n_1} + w_{n_2} - \frac{wi_{n_1}}{2} - \frac{wi_{n_2}}{2}$.
\end{lemma}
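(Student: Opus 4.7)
My plan is to translate the definitional statement $n_1 \gg n_2 \lor n_1 \equiv n_2$, which by definition means $|d_{n_1} - u_{n_1}| \leq |d_{n_2} - u_{n_2}|$, into the claimed inequality on weights through a chain of equivalences. The hypotheses on the signs of $d_{n_i} - u_{n_i}$ are there precisely to let us strip the absolute values in a clean way, and Equations 1 and 2 let us pass between the $u$/$d$ world and the $w$/$wi$ world.

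First I would note that because $d_{n_1} \geq u_{n_1}$ we have $|d_{n_1} - u_{n_1}| = d_{n_1} - u_{n_1}$, and because $u_{n_2} \geq d_{n_2}$ we have $|d_{n_2} - u_{n_2}| = u_{n_2} - d_{n_2}$. Hence the defining inequality becomes
\[
d_{n_1} - u_{n_1} \leq u_{n_2} - d_{n_2}.
\]
Using Equation 1 to replace $u_{n_1} = w_n - d_{n_1} - wi_{n_1}$ and $u_{n_2} = w_n - d_{n_2} - wi_{n_2}$, and collecting terms, this is equivalent to
\[
2w_n \geq 2d_{n_1} + 2d_{n_2} + wi_{n_1} + wi_{n_2},
\]
i.e. $w_n \geq d_{n_1} + d_{n_2} + \tfrac{wi_{n_1}}{2} + \tfrac{wi_{n_2}}{2}$. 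Finally, Equation 2 gives $d_{n_i} = w_{n_i} - wi_{n_i}$, and substituting yields
\[
w_n \geq w_{n_1} + w_{n_2} - \tfrac{wi_{n_1}}{2} - \tfrac{wi_{n_2}}{2},
\]
which is precisely the claimed inequality. Since every step above is an equivalence, both directions are established simultaneously.

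The proof is essentially routine algebra, so no step is a serious obstacle; the only subtlety is justifying the removal of the absolute values, which is exactly what the sign hypotheses $d_{n_1} \geq u_{n_1}$ and $d_{n_2} \leq u_{n_2}$ provide. Note that the edge assumption $(n_1 \rightarrow n_2) \in E$ is not actually used in the algebra itself; it is relevant only insofar as it guarantees that $n_1$ and $n_2$ both lie in $Sea(T)$ so that Equations 1 and 2 apply to each of them, and it matches the context in which Algorithm~\ref{alg_general} compares a parent candidate with one of its children.
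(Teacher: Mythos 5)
Your proof is correct and follows essentially the same route as the paper: the same chain of equivalences between $|d_{n_1}-u_{n_1}| \leq |d_{n_2}-u_{n_2}|$ and the weight inequality via Equations 1 and 2, merely run in the opposite direction (the paper starts from the weight inequality and derives the absolute-value form). Your side remark that the edge hypothesis plays no role in the algebra also matches the paper's proof, which likewise never uses it.
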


\begin{proof}
First, if $|d_{n_1} - u_{n_1}| \leq |d_{n_2} - u_{n_2}|$ then $n_1 \gg n_2$ or $n_1 \equiv n_2$.
Thus it is enough to prove that $w_{n} \geq w_{n_1} + w_{n_2} - \frac{wi_{n_1}}{2} - \frac{wi_{n_2}}{2}$ implies $|d_{n_1} - u_{n_1}| \leq |d_{n_2} - u_{n_2}|$ and vice versa when $d_{n_1} \geq u_{n_1}$ and $d_{n_2} \leq u_{n_2}$.

{\footnotesize
$
\begin{array}{l@{}r}
w_{n} \geq w_{n_1} + w_{n_2} - \frac{wi_{n_1}}{2} - \frac{wi_{n_2}}{2}\\
$We replace $w_{n_1}, w_{n_2}$ using Equation 2:$\\
w_{n} \geq (d_{n_1} + wi_{n_1}) + (d_{n_2} + wi_{n_2}) - \frac{wi_{n_1}}{2} - \frac{wi_{n_2}}{2}\\
w_{n} \geq d_{n_1} + d_{n_2} + \frac{wi_{n_1}}{2} + \frac{wi_{n_2}}{2}\\
2w_{n} \geq 2d_{n_1} + 2d_{n_2} + wi_{n_1} + wi_{n_2}\\
-2w_{n} \leq -2d_{n_1} - 2d_{n_2} - wi_{n_1} - wi_{n_2}\\
-w_{n} + 2d_{n_1} + wi_{n_1} \leq w_{n} - 2d_{n_2} - wi_{n_2}\\
$We replace $w_{n}$ using Equation 1:$\\
-(d_{n_1} + u_{n_1} + wi_{n_1}) + 2d_{n_1} + wi_{n_1} \leq (d_{n_2} + u_{n_2} + wi_{n_2}) - 2d_{n_2} - wi_{n_2}\\
-d_{n_1} - u_{n_1} - wi_{n_1} + 2d_{n_1} + wi_{n_1} \leq d_{n_2} + u_{n_2} + wi_{n_2} - 2d_{n_2} - wi_{n_2}\\
-u_{n_1} + d_{n_1} \leq -d_{n_2} + u_{n_2}\\
d_{n_1} - u_{n_1} \leq u_{n_2} - d_{n_2}\\
$As $d_{n_1} \geq u_{n_1}$ and $d_{n_2} \leq u_{n_2}$:$\\
|d_{n_1} - u_{n_1}| \leq |u_{n_2} - d_{n_2}|\\
|d_{n_1} - u_{n_1}| \leq |d_{n_2} - u_{n_2}|\\
\end{array}
$\\
}
\end{proof}

Finally, we prove the correctness of Algorithm~\ref{alg_general}.\\

\noindent {\bf Theorem~\ref{theo_corr2}.}
\emph{
Let $T = (N, E, M)$ be a variable MET, then the execution of Algorithm~\ref{alg_general} with $T$ as input always terminates producing as output a node $n \in Sea(T)$ such that $\nexists n' \in Sea(T) \mid n' \gg n$.
}

\begin{proof}
The finiteness of the algorithm is proved thanks to the following invariant: each iteration processes one single node, and the same node is never processed again. Therefore, because $N$ is finite, the loop will terminate. 

The proof of correctness is completely analogous to the proof of Theorem~\ref{theo_corr}. The only difference is the induction hypothesis and the inductive case:

\medskip
\noindent {\bf(Induction Hypothesis)} After $i$ iterations, the algorithm has a candidate node $Best \in Sea(T)$ such that $\forall n' \in Sea(T), (Best \rightarrow n') \not\in E^*, Best \gg n' \vee Best \equiv n'$.

\medskip
\noindent {\bf(Inductive Case)} We prove that the iteration $i+1$ of the algorithm will select a new candidate node $Candidate$ such that $Candidate \gg Best \lor Candidate \equiv Best$, or it will terminate selecting an optimal node.\\
Firstly, when the condition in Line (5) is satisfied $\mathit{Best}$ and $\mathit{Candidate}$ are the same node (say $n'$). According to the induction hypothesis, this node is better or equal than any other of the nodes in the set $\{n'' \in Sea(T) | (n' \rightarrow n'') \not\in E^*\}$. Therefore, because $n'$ has no children, then it is an optimal node; and it is returned in Line (5).
Otherwise, if the condition in Line (5) is not satisfied, Line (7) in the algorithm ensures that $w_{\mathit{Best}} - \frac{wi_{\mathit{Best}}}{2} > \frac{w_{n}}{2}$ being $n$ the root of $T$ because in the iteration $i$ the loop did not terminate or because $Best$ is the root (observe that an exception can happen when all nodes have an individual weight of 0. But in this case all nodes are optimal, and thus the node returned by the algorithm is optimal). Then we know that $d_{\mathit{Best}} > u_{\mathit{Best}}$ by Lemma~\ref{lem_condicionBucle}.
Moreover, according to Lines (4) and (6), we know that $Candidate$ is the heaviest child of $Best$.
We have two possibilities:
{\small
\begin{itemize}
\item $d_{\mathit{Candidate}} > u_{\mathit{Candidate}}$: In this case the loop does not terminate and $\forall n' \in Sea(T),$ $(\mathit{Candidate} \rightarrow n') \not\in E^*, \mathit{Candidate} \gg n' \vee \mathit{Candidate} \equiv n'$.
Firstly, by Lemma~\ref{lem_camino2c} we know that $\mathit{Candidate} \gg \mathit{Best} \vee \mathit{Candidate} \equiv \mathit{Best}$, and thus, by the induction hypothesis we know that $\forall n' \in Sea(T), (\mathit{Best} \rightarrow n') \not\in E^*, \mathit{Candidate} \gg n' \vee \mathit{Candidate} \equiv n'$. By Lemma~\ref{lem_hermanos} we know that $\mathit{Candidate} \gg n' \vee \mathit{Candidate} \equiv n'$ being $n'$ a brother of $\mathit{Candidate}$. Moreover, by Lemma~\ref{lem_descendienteHermano} and \ref{lem_camino2a2} we can ensure that $\mathit{Candidate} \gg n' \vee \mathit{Candidate} \equiv n'$ being $n'$ a descendant of a $\mathit{candidate}$'s brother.\\
\item $d_{\mathit{Candidate}} \leq u_{\mathit{Candidate}}$: In this case the loop terminates (Line (7)) and we know by Lemma~\ref{lem_contradiccionConD2} that $d_{n'} \leq u_{n'}$ being $n'$ any brother of $\mathit{Candidate}$. In Line (8) according to Lemma~\ref{lem_hermanosDebajo} we select the Candidate such that $\mathit{Candidate} \gg n' \lor \mathit{Candidate} \equiv n'$ being $n'$ a brother of $\mathit{Candidate}$. Moreover, by Lemma~\ref{lem_descendienteHermano2} and \ref{lem_camino2a2} we can ensure that $\mathit{Candidate} \gg n' \vee \mathit{Candidate} \equiv n'$ being $n'$ a descendant of a $\mathit{candidate}$'s brother. Then equation $(w_n \geq w_{Best} + w_{Candidate} - \frac{wi_{Best}}{2} - \frac{wi_{Candidate}}{2})$ is applied in Line (10) to select an optimal node. Lemma~\ref{lem_camino2e} ensure that the node selected is an optimal node because, according to Lemma~\ref{lem_debajo}, for all descendant $n'$ of $Candidate$, $Candidate \gg n' \vee Candidate \equiv n'$.
\end{itemize}
}
\end{proof}

\end{document}